\newtheorem{thm}{Theorem}
\newtheorem{assumption}{Assumption}
\newtheorem{definition}{Definition}
\newtheorem{lemma}{Lemma}
\newtheorem{corollary}{Corollary}
\newtheorem{proposition}{Proposition}
\newtheorem{conjecture}{Conjecture}
\begin{document}
\title{\huge{CP-OFDM Achieves the Lowest Average Ranging Sidelobe Under QAM/PSK Constellations}
}
\author{
	{
	Fan Liu,~\IEEEmembership{Senior Member,~IEEE}, Ying Zhang,~\IEEEmembership{Graduate Student Member,~IEEE}, \\Yifeng Xiong,~\IEEEmembership{Member,~IEEE}, Shuangyang Li,~\IEEEmembership{Member,~IEEE}, Weijie Yuan,~\IEEEmembership{Senior Member,~IEEE},\\ Feifei Gao,~\IEEEmembership{Fellow,~IEEE}, Shi Jin,~\IEEEmembership{Fellow,~IEEE}, and~Giuseppe Caire,~\IEEEmembership{Fellow,~IEEE}
	} 
\thanks{F. Liu and S. Jin are with the National Mobile Communications Research Laboratory,
Southeast University, Nanjing 210096, China (e-mail: fan.liu@seu.edu.cn, jinshi@seu.edu.cn).}
\thanks{Y. Zhang, and W. Yuan are with the School of Automation and Intelligent Manufacturing (AIM), Southern University of Science and Technology, Shenzhen 518055, China. (e-mail: yuanwj@sustech.edu.cn, yingzhang98@ieee.org).}
\thanks{Y. Xiong is with the School of Information and Electronic Engineering,
Beijing University of Posts and Telecommunications, Beijing 100876, China. (e-mail: yifengxiong@bupt.edu.cn)}
\thanks{S. Li and G. Caire are with the Chair of Communications and Information Theory, Technical University of Berlin, 10623 Berlin, Germany (e-mail: shuangyang.li@tu-berlin.de, caire@tu-berlin.de).}
\thanks{F. Gao is with the Department of Automation, Tsinghua University,
Beijing 100084, China (e-mail: feifeigao@ieee.org).}
}
\maketitle

\begin{abstract}
This paper aims to answer a fundamental question in the area of Integrated Sensing and Communications (ISAC): \textit{What is the optimal communication-centric ISAC waveform for ranging?} Towards that end, we first established a generic framework to analyze the sensing performance of communication-centric ISAC waveforms built upon orthonormal signaling bases and random data symbols. Then, we evaluated their ranging performance by adopting both the periodic and aperiodic auto-correlation functions (P-ACF and A-ACF), and defined the expectation of the integrated sidelobe level (EISL) as a sensing performance metric. On top of that, we proved that among all communication waveforms with cyclic prefix (CP), the orthogonal frequency division multiplexing (OFDM) modulation is the only globally optimal waveform that achieves the lowest ranging sidelobe for quadrature amplitude modulation (QAM) and phase shift keying (PSK) constellations, in terms of both the EISL and the sidelobe level at each individual lag of the P-ACF. As a step forward, we proved that among all communication waveforms without CP, OFDM is a locally optimal waveform for QAM/PSK in the sense that it achieves a local minimum of the EISL of the A-ACF. Finally, we demonstrated by numerical results that under QAM/PSK constellations, there is no other orthogonal communication-centric waveform that achieves a lower ranging sidelobe level than that of the OFDM, in terms of both P-ACF and A-ACF cases.
\end{abstract}
\begin{IEEEkeywords}
Integrated Sensing and Communications, OFDM, auto-correlation function, ranging sidelobe.
\end{IEEEkeywords}

\section{Introduction}
\IEEEPARstart{E}{nvisioned} as a transformative paradigm, the sixth generation (6G) of wireless networks is set to drive forward emerging applications such as autonomous vehicles, smart factories, digital twins, and low-altitude economy \cite{Chafii2023CST,saad2019vision}. This new era of connectivity will extend beyond traditional communication roles, ushering in the ISAC technology \cite{9737357}. Notably, the International Telecommunication Union (ITU) has recently endorsed the global 6G vision, highlighting ISAC as one of its six primary usage scenarios \cite{ITU2023}.

The core concept of ISAC in 6G networks is to leverage wireless resources such as time, frequency, beam, and power across both sensing and communication functionalities using a unified hardware platform \cite{9606831}. The main challenge in ISAC lies in developing a dual-functional waveform that can effectively handle both target information acquisition and communication information delivery over the ISAC channel \cite{10334037,10292797}, which generally follows three design philosophies: sensing-centric, communication-centric, and joint designs \cite{9737357}. While the sensing-centric methodology aims at embedding communication information into existing radar sensing waveforms, e.g., chirp signals, its communication-centric counterpart seeks to implement sensing over standardized communication signaling schemes. In contrast to those, the joint design approach creates novel ISAC waveforms from scratch, aiming to balance and optimize the tradeoff between sensing and communication \cite{liu2021CRB}. 

While each of the aforementioned designs has its own applicable scenarios, the communication-centric approach is anticipated to be more favorable in future 6G ISAC networks due to its low implementation complexity \cite{10012421,9921271}. This approach allows for the direct use of a communication waveform for sensing, eliminating the need for waveform reshaping. Unlike channel estimation, which only employs known pilot symbols, typical monostatic or cooperative bi-static ISAC systems benefit from the full knowledge of the emitted waveform being shared between the ISAC transmitter (Tx) and the sensing receiver (Rx) \cite{10188491}. This enables the use of both pilot and data symbols for sensing, thereby enhancing range and Doppler resolutions by fully exploiting time-frequency resources. However, to convey useful information, the communication data symbols have to be \textit{random}, which may degrade the sensing performance. This has been recently identified as a fundamental deterministic-random tradeoff (DRT) in ISAC systems \cite{10471902,10147248,zhang2023input}. Consequently, it is essential to seek for optimal communication-centric waveforms that minimize the loss in sensing performance.

Classical communication waveforms convey data symbols using a well-designed orthonormal basis. In its simplest form, such a basis may consist of time-shifted unit impulse functions, corresponding to single-carrier (SC) waveforms \cite{8114253}. In contrast, OFDM waveforms modulate frequency-domain symbols using multiple sinusoidal subcarriers centered at different frequencies, leveraging the inverse discrete Fourier transform (IDFT) matrix as the signaling basis \cite{6923528,10012421}. Additionally, code-division multiple access (CDMA) schemes use pseudo-random codes, such as Walsh codes constructed by Hadamard matrices, to carry information symbols \cite{771348}. To address the time-frequency doubly selective effect of the high-mobility channels, orthogonal time-frequency space (OTFS) modulation has been proposed as a potential 6G waveform, modulating symbols in the delay-Doppler domain through the inverse symplectic finite Fourier transform (ISFFT) \cite{7925924,9508932}. More recently, a novel affine frequency division multiplexing (AFDM) waveform was conceived for high-mobility communications, which places symbols in the affine Fourier transform (AFT) domains with orthogonal chirp signals as signaling basis \cite{10087310,10439996}. Against this background, an important, yet unresolved question is: \textit{What is the optimal communication-centric ISAC waveform under random signaling?}

A substantial body of work has focused on the analysis and design of communication-centric ISAC waveforms. Pioneered by Sturm and Wiesbeck, the feasibility of using OFDM waveforms to measure delay and Doppler parameters of radar targets has been investigated in \cite{sturm2011waveform} for single-antenna systems, which has been recently generalized to the multi-antenna counterparts \cite{10175076,9149408,9682124}. The authors in \cite{9359665} proposed a code-division OFDM (CD-OFDM) waveform for ISAC, combining CDMA and OFDM techniques. The sensing performance of the cyclic prefixed single-carrier (CP-SC) waveform was examined in \cite{9005192}. Moreover, the study in \cite{chen2023fdss} demonstrated that OFDM offers superior ranging performance compared to discrete Fourier transform spread OFDM (DFT-s-OFDM), which can be considered as a specific OFDM signaling scheme with SC characteristics. A recent debate has emerged on whether OTFS outperforms OFDM in terms of the sensing performance. The authors in \cite{9109735} investigated the range and velocity estimation errors of both waveforms, finding that OFDM performs slightly better than OTFS. In \cite{10638525}, the ambiguity functions of both waveforms were illustrated, suggesting that OTFS produces lower sidelobes in both delay and Doppler domains. However, the comparison in \cite{10638525} was biased, as it utilized random QPSK symbols for OFDM but deterministic symbols for OTFS. More recently, a dual-domain ISAC signaling scheme that integrates both OFDM and OTFS was proposed in \cite{10264814}. Although existing works have significantly advanced the optimization of communication-centric ISAC waveforms, none has addressed the underlying question in a rigorous manner.

In this paper, we attempt to partially answer this open question from the perspective of target range estimation under a monostatic ISAC setup, where a Tx emits a communication-centric ISAC signal generated by modulating random information symbols over an orthonormal signaling basis. The signal is received by communication users, while being reflected back from distant targets to a sensing Rx collocated with the ISAC Tx. As a consequence, the ISAC signal is fully known to the sensing Rx despite its randomness. The ranging performance is then evaluated by the auto-correlation function (ACF) of the ISAC signal, under both periodic and aperiodic convolutions, corresponding to the matched-filtering operation for signals with and without CP, respectively. For clarity, we summarize our main contributions as follows:
\begin{itemize}
    \item We developed a generic framework to analyze the sensing performance of communication-centric ISAC waveforms built upon orthonormal signaling basis and random data symbols. Specifically, we analyzed both the P-ACF and A-ACF of random ISAC signals, and defined the expectation of the integrated sidelobe level (EISL) as a performance metric for ranging.
    \item We derived closed-form expressions of the sidelobe levels of both the P-ACF and A-ACF for random ISAC signals, under various types of random communication symbols including sub-Gaussian (e.g., QAM and PSK) and super-Gaussian (e.g., specific Amplitude and Phase-Shift Keying (APSK), and index modulation) constellations.
    \item We proved that among all communication-centric ISAC waveforms with CP, OFDM is the \textit{only} globally optimal waveform that achieves the lowest ranging sidelobe level for standard QAM/PSK constellations, in terms of both EISL and each individual sidelobe index of its P-ACF. As a direct corollary, we also proved that the CP-SC waveform achieves the lowest Doppler sidelobe level.
    \item We proved that among all communication-centric ISAC waveforms without CP, OFDM is locally optimal in the sense that it achieves a local minimum of the EISL of the A-ACF for QAM/PSK constellations. We conjecture that OFDM is also the global EISL minimizer for signals without CP under sub-Gaussian constellations.
\end{itemize}

The remainder of this paper is organized as follows. Sec. \ref{sec_2} introduces the system model of the ISAC system and the corresponding performance metrics. Sec. \ref{sec_3} and \ref{sec_4} elaborate on the P-ACF and A-ACF for random ISAC signals and the corresponding optimal signaling strategies, respectively. Sec. \ref{sec_5} provides simulation results to validate the theoretical analysis of the paper. Finally, Sec. \ref{sec_6} concludes the paper.
\\\indent {\emph{Notations}}: Matrices are denoted by bold uppercase letters (e.g., $\mathbf{U}$), vectors are represented by bold lowercase letters (e.g., $\mathbf{x}$), and scalars are denoted by normal font (e.g., $N$); The $n$th entry of a vector $\mathbf{s}$, and the $(m,n)$-th entry of a matrix $\mathbf{A}$ are denoted as $s_n$ and $a_{m.n}$, or $\left[\mathbf{s}\right]_n$ and $\left[\mathbf{A}\right]_{m,n}$, respectively; $\otimes$, $\odot$ and $\operatorname{vec}\left(\cdot\right)$ denote the Kronecker product, the Hadamard product, and the vectorization in terms of the columns of input matrices; $\left(\cdot\right)^T$, $\left(\cdot\right)^H$, and $\left(\cdot\right)^\ast$ stand for transpose, Hermitian transpose, and complex conjugate of the matrices; The entry-wise square of a matrix $\mathbf{A}$ is denoted as $ \mathbf{X}\odot\mathbf{X}^\ast \triangleq \left|\mathbf{X}\right|^2$; $\mathbf{A}\cdot\mathbf{B}$ represents the row-wise Kronecker product between matrices $\mathbf{A}$ and $\mathbf{B}$; $\operatorname{Re}\left(\cdot\right)$ and $\operatorname{Im}\left(\cdot\right)$ denote the real and imaginary parts of the argument; The $\ell_p$ norm and Frobenius norm are written as $\left\| \cdot\right\|_p$ and $\left\| \cdot\right\|_F$; $\mathbb{E}(\cdot)$ and $\text{Var}(\cdot)$ represent the expectation and variance; $\circledast$ denotes the circular convolution; The notation ${\rm diag}(\mathbf{a})$ denotes the diagonal matrix obtained by placing the entries of $\mathbf{a}$ on its main diagonal, while ${\rm ddiag}(\mathbf{A})$ denotes the vector obtained by extracting the main diagonal entries from $\mathbf{A}$; The symbol $\delta_{m,n}$ denotes the Kronecker delta function given by $$\delta_{m,n}=\left\{\begin{aligned}0,~~&\hbox{$m\neq n$;}\\ 1,~~&\hbox{$m=n.$}\end{aligned}\right.$$

\begin{table}
\renewcommand{\arraystretch}{1.3}

\caption{List of Acronyms}
\label{table_example}
\centering
\begin{tabular}{@{}|l|r|@{}}
\hline
ISAC & Integrated Sensing and Communications \\
ACF & Auto-correlation Function \\
P-ACF & Periodic ACF \\
A-ACF & Aperiodic ACF \\
CP & Cyclic Prefix \\
DFT & Discrete Fourier Transform \\
IDFT & Inverse Discrete Fourier Transform \\
QAM & Quadrature Amplitude Modulation \\
PSK & Phase Shift Keying \\
APSK & Amplitude Phase Shift Keying \\
ISL & Integrated Sidelobe Level \\
EISL & Expectation of Integrated Sidelobe Level \\
SC & Single-Carrier \\
OFDM & Orthogonal Frequency Divsion Multiplexing \\
CDMA & Code-Division Multiple Access \\
OTFS & Orthogonal Time-Frequency Space \\
AFDM & Affine Frequency Division Multiplexing \\
\hline
\end{tabular}

\end{table}

\section{System Model}\label{sec_2}
\subsection{ISAC Signal Model}
We consider a monostatic ISAC system as shown in Fig. \ref{ISAC_Scenario}. The ISAC Tx emits an ISAC signal modulated with random communication symbols, which is received at a communication Rx, and is simultaneously reflected back to the sensing Rx by one or more targets at different ranges. The sensing Rx, which is collocated with the ISAC Tx, performs matched-filtering to estimate the delay parameters of targets by using the known random ISAC signal.

Let $\mathbf{s} = \left[s_1, s_2,\ldots,s_{N}\right]^T\in \mathbb{C}^{N\times 1}$ be $N$ communication symbols to be transmitted. We assume that each symbol is randomly drawn from a complex alphabet $\mathcal{S}$ in an i.i.d. manner, which is also known as a \textit{constellation}. Without loss of generality, we adopt the following assumptions for the considered constellations.
\begin{assumption}[Unit Power]
    We focus on constellations with a unit power, namely, 
    \begin{equation}
        \mathbb{E}(|s|^2) = 1,\quad \forall s\in\mathcal{S}.
    \end{equation}
\end{assumption}
\noindent Assumption 1 normalizes the power of the constellations such that their sensing and communication performance could be fairly compared.
\begin{assumption}[Rotational Symmetry]
    The expectation and pseudo variance of the constellation are zero, namely
    \begin{equation}
        \mathbb{E}(s) = 0,\quad\mathbb{E}(s^2) = 0, \quad \forall s\in\mathcal{S}.
    \end{equation}
\end{assumption}
\noindent We remark that most of the commonly employed constellations meet the criterion in Assumption 2, including all the PSK and QAM constellations except for BPSK and 8-QAM. Nevertheless, we will show in later sections that our results are indeed applicable to BPSK, which makes 8-QAM the only outlier of the proposed framework. 

Let us further define
\begin{equation}
    \mu_4 \triangleq \frac{\mathbb{E}\left\{|s-\mathbb{E}(s)|^4\right\}}{\mathbb{E}\left\{|s-\mathbb{E}(s)|^2\right\}^2} =  \mathbb{E}(|s|^4),
\end{equation}
which is known as the \textit{kurtosis} of the constellation, and is equivalent to its 4th-order moment if the constellation has zero mean and unit power. Note that $\mu_4 \ge 1$ since the 4th-moment is greater than the square of the 2nd-moment. 

It is worth highlighting that the standard complex Gaussian distribution also satisfies the above criteria, which makes it an adequate constellation. Indeed, it is well-known that the Gaussian distributed constellation achieves the capacity of a Gaussian channel. As we shall see later, the circularly symmetric complex Gaussian constellation with unit variance, which has a kurtosis of 2, also serves as an important baseline in terms of the sensing performance of the ISAC signal. Accordingly, it would be useful to define the following two types of constellations.
\begin{definition}[Sub-Gaussian Constellation]
    A sub-Gaussian constellation is a constellation with kurtosis less than 2, subject to Assumptions 1 and 2.
\end{definition}
\begin{definition}[Super-Gaussian Constellation]
    A super-Gaussian constellation is a constellation with kurtosis greater than 2, subject to Assumptions 1 and 2.
\end{definition}

\begin{figure}[!t]
	\centering
	\includegraphics[scale=0.25]{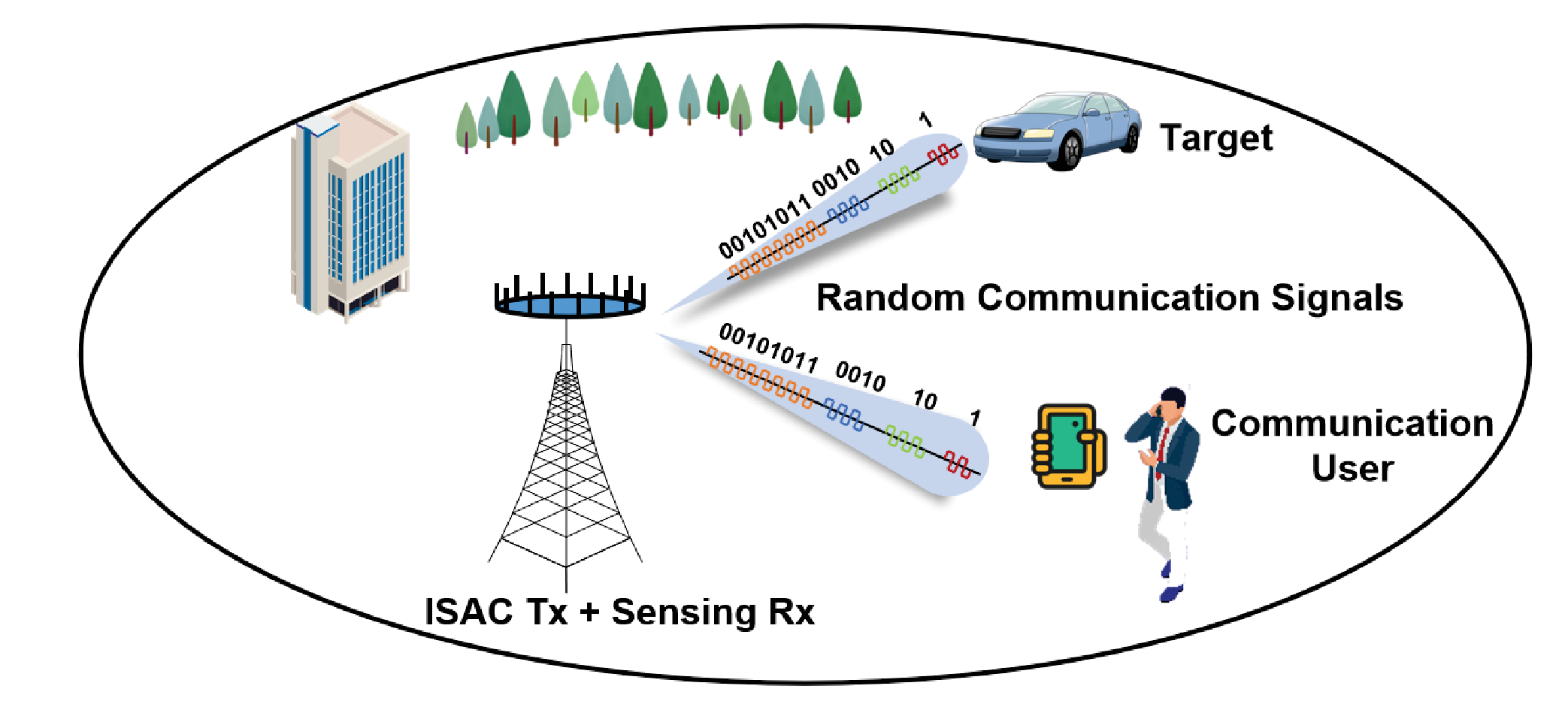}
	\caption{The ISAC transmission scenario using communication-centric waveform.}
        \label{ISAC_Scenario}
\end{figure} 

All the QAM and PSK constellations are sub-Gaussian. In particular, the kurtosis is equal to 1 for all PSK constellations, and is between 1 and 2 for all QAM constellations. For clarity, we show the kurtosis values of typical QAM and PSK constellations in TABLE. \ref{tab: kurtosis}. There are certain types of super-Gaussian constellations with drastically varying amplitudes. Alternatively, one may also generate super-Gaussian constellations by applying geometric or probabilistic constellation shaping techniques. Index modulation (IM) \cite{8315127,8004416} is one of such techniques that are capable of producing super-Gaussian constellations. To elaborate, let us consider a constellation having a kurtosis of $\mu_4$. By applying the IM, a probability mass would be placed at the origin of the I/Q plane, while leaving the rest of the constellation unaltered. Assuming that the probability of transmitting the origin is $p_0$, the resulting kurtosis would then become
\begin{equation}\label{im_kurtosis}
\widetilde{\mu}_4 =\frac{(1-p_0)\mathbb{E}\left\{|s-\mathbb{E}(s)|^4\right\}}{\mathbb{E}\left\{(1-p_0)|s-\mathbb{E}(s)|^2\right\}^2}=\frac{\mu_4}{1-p_0} \ge \frac{1}{1-p_0}.
\end{equation}
Observe that having an origin-transmitting probability of $p_0>\frac{1}{2}$ would effectively transform any constellation into a super-Gaussian one. Super-Gaussian constellations can be useful in scenarios where energy efficiency is of priority or non-coherent communication schemes are considered \cite{923716,1532206,1532207}.

\begin{table}[!t]
\caption{Kurtosis values of typical sub-Gaussian constellations}
\label{tab: kurtosis}
\begin{tabular}{l|c|c|c|c}
\hline
\textbf{Constellation} & PSK     & 16-QAM  & 64-QAM   & 128-QAM  \\ \hline
\textbf{Kurtosis}      & 1       & 1.32    & 1.381   & 1.3427   \\ \hline
\textbf{Constellation} & 256-QAM & 512-QAM & 1024-QAM & 2048-QAM \\ \hline
\textbf{Kurtosis}      & 1.3953  & 1.3506  & 1.3988   & 1.3525   \\ \hline
\end{tabular}
\end{table}
In typical communication systems, we shall modulate $N$ symbols over an orthonormal basis on the time domain, which may be defined as a unitary matrix $\mathbf{U} = \left[\mathbf{u}_1,\mathbf{u}_2,\ldots,\mathbf{u}_{N}\right]\in \mathbb{U}\left(N\right)$, where $\mathbb{U}\left(N\right)$ denotes the unitary group of degree $N$. Consequently, the discrete time-domain signal can be expressed as
\begin{equation}
    \mathbf{x} = \mathbf{U}\mathbf{s} = \sum\limits_{n = 1}^{N} {s_n\mathbf{u}_n}.
\end{equation}
The above generic model may represent most of the communication signaling schemes, where we show some examples below:
\begin{itemize}
    \item \textbf{SC:} $\mathbf{U} = \mathbf{I}_N$. In this case, we are simply transmitting symbols consecutively on the time domain, where the orthonormal basis is composed of nothing but unit impulse functions, yielding an SC signal.
    \item \textbf{OFDM:} $\mathbf{U} = \mathbf{F}_N^H$, where $\mathbf{F}_N$ is the normalized discrete Fourier transform (DFT) matrix of size $N$. The symbols are placed in the frequency domain, which makes $\mathbf{x}$ an OFDM signal with $N$ subcarriers. If $L$ OFDM symbols are transmitted with $M = N/L$ subcarriers, then $\mathbf{U} = \mathbf{I}_L \otimes \mathbf{F}_{M}^H$.
    \item \textbf{CDMA:} $\mathbf{U} =  \mathbf{C}_N$, where $\mathbf{C}_N$ is the Hadamard matrix of size $N$. The symbols are then placed in the code domain, making $\mathbf{x}$ a CDMA signal that has been extensively used in CDMA2000 \cite{995853}.
    \item \textbf{OTFS:} $\mathbf{U} =\mathbf{F}_{M}^H \otimes \mathbf{I}_L$. In this case, $\mathbf{s}$ is placed in the delay-Doppler domain, where the number of occupied time slots and subcarriers are $M$ and $L$, respectively \cite{10463758}.
    \item \textbf{AFDM:} $\mathbf{U} = \mathbf{\Lambda}_{c_1}^H\mathbf{F}_N^H\mathbf{\Lambda}_{c_2}^H$  \cite{10087310,10439996}, where $\mathbf{\Lambda}_{c} = \operatorname{Diag}(1,e^{-j2\pi c1^2},\ldots,e^{-j2\pi cN^2})$. In this case, $\mathbf{U}$ is the inverse discrete affine Fourier transform (IDAFT) matrix, and the symbols are placed in the AFT domain.
\end{itemize}

We highlight here that in many communication signaling schemes, such as OFDM, CP-SC (also known as single-carrier frequency-domain equalization (SC-FDE)) \cite{995852}, OTFS, and AFDM, the addition of a cyclic prefix (CP) is necessary, which eliminates the inter-symbol interference (ISI) and reduces the computational complexity by processing the received signal in the frequency/delay-Doppler domain. Nevertheless, adding CPs to the signal may not always be a convention in radar sensing systems. This is because for long-range detection tasks, the targets may be located far beyond the coverage of a CP, in which case a sufficiently small duty cycle ($<10\%$) is required, rendering the ISAC signal as a zero-padding waveform in contrast to its CP'ed counterpart.

Towards that end, we present in this paper a thorough analysis on both cases, i.e., ISAC signaling with and without CP, which correspond to periodic and linear convolution processing of the matched-filter at the sensing Rx, respectively. Without loss of the generality, whenever a CP is added, it is assumed to be larger than the maximum delay of the communication paths and sensing targets.

\subsection{Communication System Model}
Before delving into the technical details, let us first briefly examine the communication performance of the ISAC signal under two typical communication channels: The additive white Gaussian noise (AWGN) channel and linear time-invariant (LTI) multi-path channel, where we assume a CP is added to the signal for the purpose stated above. In the AWGN case, the received signal after CP removal reads
\begin{equation}
    \mathbf{y}_c = \sqrt{\rho}\mathbf{x} + \mathbf{z} = \sqrt{\rho}\mathbf{U}\mathbf{s} + \mathbf{z},
\end{equation}
where $\rho$ is the signal-to-noise ratio (SNR), and $\mathbf{z}\sim \mathcal{CN}(\mathbf{0},\mathbf{I}_N)$ denotes the WGN with zero mean and unit variance. It then holds immediately that
\begin{equation}
    I(\mathbf{y}_c;\mathbf{x}) = I(\sqrt{\rho}\mathbf{U}\mathbf{s} + \mathbf{z};\mathbf{U}\mathbf{s}) = I(\mathbf{y}_c;\mathbf{s}).
\end{equation}
That is, unitary transform keeps the input-output mutual information (MI) unchanged in an AWGN channel, thereby preserving the communication achievable rate. More precisely, in an AWGN channel, the communication symbols may be modulated over arbitrary orthonormal basis, without performance loss in the communication rate.

Let us now turn our focus to the LTI multi-path channel with WGN, which outputs the following signal at the communication Rx after removing the CP:
\begin{equation}
    \mathbf{y}_c = \sqrt{\rho}\mathbf{H}_c\mathbf{x} + \mathbf{z} = \sqrt{\rho}\mathbf{H}_c\mathbf{U}\mathbf{s} + \mathbf{z},
\end{equation}
where $\mathbf{H}_c$ is a circulant matrix, with each row containing circularly ordered entries, namely the complex channel gains at multiple delayed paths. In such a channel, the MI is no longer invariant under unitary transform \cite{5744135}. It then follows that OFDM signaling achieves the optimal MI in LTI multi-path channels, since $\mathbf{H}_c$ is diagonalized by $\mathbf{U}$, and an optimal power allocation strategy can thereby be employed. One would then wonder whether the OFDM is still optimal for sensing, especially under random ISAC signaling. This represents a question receiving much attention in the area of ISAC: What is the optimal communication-centric ISAC waveform? In later sections, we attempt to answer this question by proving that the OFDM is the optimal waveform that achieves the lowest ranging sidelobe under QAM/PSK alphabets.



\subsection{Sensing Performance Metric}
The auto-correlation function (ACF) of the signal is an important performance indicator for ranging, in particular for the matched filtering process at the sensing Rx. The ACF may be defined as the aperiodic or periodic self-convolution of the signal, depending on whether a CP is added to the signal.
\subsubsection{Aperiodic ACF (A-ACF)}
\begin{equation}
    r_k = \mathbf{x}^H\mathbf{J}_k\mathbf{x} = r_{-k}^*, \quad k = 0,1,\ldots, N-1,
\end{equation}
where $\mathbf{J}_k$ is the $k$th shift matrix in the form of
\begin{equation}
    \mathbf{J}_k = \left[ {\begin{array}{*{20}{c}}
  {\mathbf{0}}&{{{\mathbf{I}}_{N - k}}} \\ 
  {\mathbf{0}}&{\mathbf{0}} 
\end{array}} \right].
\end{equation}
Given the symmetry of the ACF, we have
\begin{equation}
    \mathbf{J}_{-k} = \mathbf{J}_k^T= \left[ {\begin{array}{*{20}{c}}
  {\mathbf{0}}&{\mathbf{0}} \\ 
  {{{\mathbf{I}}_{N - k}}}&{\mathbf{0}} 
\end{array}} \right]. 
\end{equation}
\subsubsection{Periodic ACF (P-ACF)}
\begin{equation}
    \tilde{r}_k = \mathbf{x}^H\tilde{\mathbf{J}}_k\mathbf{x} = \tilde{r}_{N-k}^*, \quad k = 0,1,\ldots, N-1,
\end{equation}
where $\tilde{\mathbf{J}}_k$ is defined as the $k$th periodic shift matrix \cite{4838816}, given as
\begin{equation}
    \tilde{\mathbf{J}}_k = \left[ {\begin{array}{*{20}{c}}
  {\mathbf{0}}&{{{\mathbf{I}}_{N - k}}} \\ 
  {\mathbf{I}_k}&{\mathbf{0}} 
\end{array}} \right],
\end{equation}
and
\begin{equation}
    \tilde{\mathbf{J}}_{N-k} = \tilde{\mathbf{J}}_k^T= \left[ {\begin{array}{*{20}{c}}
  {\mathbf{0}}&{{\mathbf{I}_k}} \\ 
  {{{\mathbf{I}}_{N - k}}}&{\mathbf{0}} 
\end{array}} \right]. 
\end{equation}

In both cases, one may be concerned by the sidelobe level of the ACF, which plays a critical role in multi-target detection problems. Let us take the A-ACF as an example. The sidelobe of $r_k$ is defined as
\begin{equation}
    |r_k|^2 = |\mathbf{x}^H\mathbf{J}_k\mathbf{x}|^2 = |r_{-k}|^2, \quad k = 1,\ldots, N-1, 
\end{equation}
where $|r_0|^2 = |\mathbf{x}^H\mathbf{x}|^2 = \left\|\mathbf{x}\right\|_2^4$ is the mainlobe of the ACF. Accordingly, the integrated sidelobe level (ISL) may be expressed as \cite{4749273,7093191}
\begin{equation}
    \text{ISL} = \sum\limits_{k = 1}^{N - 1}{|r_k|^2}.
\end{equation}
Due to the random nature of the ISAC signal $\mathbf{x}$, the ACF becomes a random function. Hence, a natural choice is to define the average of the sidelobe level as a performance metric\footnote{We note that, alongside the average sidelobe level, the maximum sidelobe level (MSL) may also serve as a performance indicator for random ACFs. Nonetheless, the MSL is a non-smooth function, typically presenting substantial difficulties in deriving analytical insights, particularly for random signaling. As such, we leave the investigation of MSL as a direction for future research.}. This can be represented by
\begin{equation}
    \mathbb{E}(|r_k|^2) = \mathbb{E}(|\mathbf{x}^H\mathbf{J}_k\mathbf{x}|^2) = \mathbb{E}(|\mathbf{s}^H\mathbf{U}^H\mathbf{J}_k\mathbf{U}\mathbf{s}|^2), \;\;\forall k,
\end{equation}
where the expectation is with respect to the random symbol vector $\mathbf{s}$, $\mathbb{E}(|r_0|^2) = \mathbb{E}(\left\|\mathbf{x}\right\|_2^4)$ denotes the average mainlobe, and $\mathbb{E}(|r_k|^2), k\ne 0$ characterizes the average sidelobe level at the index $k$. The expectation of the ISL (EISL) is therefore given by
\begin{equation}
    \text{EISL} = \sum\limits_{k = 1}^{N - 1}\mathbb{E}({|r_k|^2}) = \sum\limits_{k = 1}^{N - 1}\mathbb{E}(|\mathbf{s}^H\mathbf{U}^H\mathbf{J}_k\mathbf{U}\mathbf{s}|^2).
\end{equation}
Accordingly, seeking for the optimal signaling basis is equivalent to solving the following stochastic optimization problem:
\begin{equation}\label{stochastic_opt}
    \min_{\mathbf{U}\in\mathbb{U}(N)}\;\;\sum\limits_{k = 1}^{N - 1}\mathbb{E}(|\mathbf{s}^H\mathbf{U}^H\mathbf{J}_k\mathbf{U}\mathbf{s}|^2).    
\end{equation}

\textbf{Remark 1:} It is worth pointing out that if $\mathbf{s}$ is realized from a circularly-symmetric Gaussian codebook, namely, $s_n \sim \mathcal{CN}(0,1)$, then the average mainlobe and sidelobe levels of its ACFs will keep unchanged regardless of the choice of the signaling basis. This is due to the simple fact that the standard Gaussian distribution is unitary invariant. 

While attaining the analytical solution of \eqref{stochastic_opt} for an arbitrarily distributed constellation seems to be a highly challenging task, in what follows, we reveal that the OFDM waveform achieves the lowest ranging sidelobe for both QAM and PSK. That is to say, $\mathbf{U} = \mathbf{F}_N^H$ is a minimizer of the EISL for both the A-ACF and P-ACF when $\mathbf{s}$ is realized from QAM/PSK constellations. We will further prove that, OFDM is the only globally optimal waveform that minimizes not only the EISL but also the individual sidelobe level at each $k\ne 0$ in the P-ACF case.

\section{The P-ACF Case}\label{sec_3}
Although being similarly formulated, dealing with the P-ACF is generally a simpler task than its aperiodic counterpart. In this section, we present the main results of the P-ACF case. 

First, by leveraging the well-known Wiener-Khinchin theorem, we may simplify the P-ACF by exploiting the frequency-domain representation of the ISAC signal. By performing DFT to $\mathbf{x}$, we obtain
\begin{equation}
    \mathbf{F}_N\mathbf{x} = \mathbf{F}_N\mathbf{U}\mathbf{s} \triangleq \mathbf{V}^H\mathbf{s},
\end{equation}
where $\mathbf{V} = \mathbf{U}^H\mathbf{F}_N^H = \left[\mathbf{v}_1,\mathbf{v}_2,\ldots,\mathbf{v}_{N}\right]\in \mathbb{U}\left(N\right)$. Accordingly, the P-ACF vector $\tilde{\mathbf{r}} = \left[\tilde{r}_0,\tilde{r}_1,\ldots,\tilde{r}_{N-1}\right]^T$ may be recast in the following form of:
\begin{equation}
    \tilde{\mathbf{r}} = \sqrt{N}\mathbf{F}_N^H\left|\mathbf{F}_N\mathbf{x}\right|^2 = \sqrt{N}\mathbf{F}_N^H\left|\mathbf{V}^H\mathbf{s}\right|^2,
\end{equation}
yielding
\begin{equation}\label{P_ACF_expansion}
    \tilde{r}_k =\sum\limits_{n = 1}^{N}|\mathbf{v}_n^H\mathbf{s}|^2 e^{\frac{j2\pi k(n-1)}{N}},\quad k = 0,1,\ldots,N-1.
\end{equation}


\subsection{Main Results}
We are now ready to express the squared P-ACF in closed form.
\begin{proposition}\label{prop:pacf}
The average squared P-ACF is
\begin{equation}\label{squared_P-ACF}
    \mathbb{E}(|\tilde{r}_k|^2) = N^2\delta_{0,k}+N+(\mu_4-2)\left\|\mathbf{b}_k\right\|_2^2,
\end{equation}
where
\begin{equation}\label{bk_defined}
    \mathbf{b}_k = {\left[ {\sum\limits_{n = 1}^N {{{\left| {{v_{1,n}}} \right|}^2}{e^{\frac{{ - j2\pi k\left( {n - 1} \right)}}{N}}}, \ldots ,\sum\limits_{n = 1}^N {{{\left| {{v_{N,n}}} \right|}^2}{e^{\frac{{ - j2\pi k\left( {n - 1} \right)}}{N}}}} } } \right]^T},
\end{equation}
with $v_{m,n}$ being the $(m,n)$-th entry of $\mathbf{V}$.
\end{proposition}
\begin{proof}
    See Appendix \ref{prop_1_proof}.
\end{proof}

\textbf{Remark 2:} Proposition \ref{prop:pacf} suggests that the sidelobe of average squared P-ACF could be reduced by either increasing $N$ or decreasing $\mu_4$. Especially, the latter could be achieved by employing constellation shaping techniques.

\begin{corollary}\label{coro: mainlobe}
The average mainlobe level of the P-ACF may be expressed as
\begin{equation}\label{mainlobe_level}
    \mathbb{E}(|\tilde{r}_0|^2)= N^2 + (\mu_4 - 1)N.
\end{equation}
\end{corollary}
\begin{proof}
It can be readily shown that $\mathbf{b}_0 = \mathbf{1}_N$, and hence $\left\|\mathbf{b}_0\right\|_2^2 = N$, which immediately yields \eqref{mainlobe_level}.
An alternative proof is based on the fact that the unitary transform does not change the $l_2$ norm of the signal, leading to
\begin{align}
\mathbb{E}(|\tilde{r}_0|^2) \nonumber&= \mathbb{E}(\left\|\mathbf{U}\mathbf{s}\right\|_2^4) = \mathbb{E}(\left\|\mathbf{s}\right\|_2^4)= \sum\limits_{n = 1}^{N}\sum\limits_{m = 1}^{N}\mathbb{E}(|s_n|^2|s_m|^2) 
\\
\nonumber&= N\mathbb{E}(|s_n|^4) + \sum\limits_{n = 1}^{N}\sum\limits_{\begin{subarray}{l} 
  m = 1 \\ 
  m \ne n 
\end{subarray}}^N\mathbb{E}(|s_n|^2)\mathbb{E}(|s_m|^2)\\ &=N^2 + (\mu_4 - 1)N.
\end{align}
\end{proof}

\begin{proposition}\label{prop:eisl}
    The EISL of the P-ACF is
    \begin{equation}\label{EISL_closed_form}
        \sum\limits_{k = 1}^{N - 1}\mathbb{E}({|\tilde{r}_k|^2}) = N(N-1) + (\mu_4 - 2)N(\left\|\mathbf{F}_N\mathbf{U}\right\|_4^4-1),
    \end{equation}
    where $\left\|\cdot\right\|_4$ denotes the $\ell_4$ norm of the matrix.
\end{proposition}
\begin{proof}
    See Appendix \ref{prop_2_proof}.
\end{proof}

\subsection{Optimal Signaling Schemes}
We may now establish the optimality of the OFDM waveform in Theorem \ref{thm:pacf_eisl}.
\begin{thm}[Global Optimality of the OFDM for Ranging]\label{thm:pacf_eisl}
    OFDM is the only waveform that achieves the lowest EISL of the P-ACF for sub-Gaussian constellations, e.g., PSK and QAM.
\end{thm}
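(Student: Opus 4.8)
The plan is to read Theorem~1 directly off the closed form in Proposition~2. For every PSK or QAM constellation one has $\mu_4<2$ (in fact $\mu_4=1$ for PSK and $1<\mu_4<2$ for QAM), so the factor $\mu_4-2$ in \eqref{EISL_closed_form} is strictly negative, and minimizing the EISL over $\mathbf{U}\in\mathbb{U}(N)$ is equivalent to \emph{maximizing} $\left\|\mathbf{F}_N\mathbf{U}\right\|_4^4$. Since $\mathbf{F}_N$ is unitary, $\mathbf{V}^H=\mathbf{F}_N\mathbf{U}$ ranges over all of $\mathbb{U}(N)$ as $\mathbf{U}$ does, and because the entrywise $\ell_4$ norm is invariant under conjugate transpose, the problem collapses to maximizing $\left\|\mathbf{V}\right\|_4^4=\sum_{m,n}\left|v_{m,n}\right|^4$ over $\mathbf{V}\in\mathbb{U}(N)$.

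Next I would bound this objective row by row. Fixing $m$ and putting $a_{m,n}=\left|v_{m,n}\right|^2\ge 0$, unitarity gives $\sum_{n}a_{m,n}=1$, whence
\[
    \sum_{n=1}^{N}\left|v_{m,n}\right|^4=\sum_{n=1}^{N}a_{m,n}^2\le\Big(\max_{n}a_{m,n}\Big)\sum_{n=1}^{N}a_{m,n}\le 1 ,
\]
with both inequalities tight simultaneously if and only if exactly one $a_{m,n}$ equals $1$ and the rest vanish, i.e. row $m$ of $\mathbf{V}$ has a single entry of unit modulus. Summing over $m$ yields $\left\|\mathbf{V}\right\|_4^4\le N$, and equality forces every row of $\mathbf{V}$ to be a unit-modulus multiple of a standard basis vector. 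Orthonormality of the $N$ rows then makes their supports distinct, hence a bijection onto the $N$ columns, so $\mathbf{V}=\mathbf{\Phi}\mathbf{\Pi}$ with $\mathbf{\Phi}$ diagonal unitary and $\mathbf{\Pi}$ a permutation matrix; that is, $\mathbf{V}$ is a generalized permutation (monomial) matrix, and conversely every such $\mathbf{V}$ attains $\left\|\mathbf{V}\right\|_4^4=N$ (e.g. $\mathbf{V}=\mathbf{I}$, i.e. $\mathbf{U}=\mathbf{F}_N^H$).

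It then remains to translate the equality condition back to the signaling basis. From $\mathbf{V}^H=\mathbf{F}_N\mathbf{U}$ we get $\mathbf{U}=\mathbf{F}_N^H\mathbf{V}^H=\mathbf{F}_N^H\mathbf{M}$ with $\mathbf{M}=\mathbf{\Pi}^T\mathbf{\Phi}^\ast$ again monomial, so the EISL-optimal bases are exactly the IDFT matrix with its columns permuted and individually phase-rotated. A column permutation merely relabels subcarriers, while a per-column unit-modulus scaling is absorbed into the data symbols and, by Assumption~2, changes neither their distribution nor the EISL; modulo these immaterial conventions the minimizer is unique and equals $\mathbf{U}=\mathbf{F}_N^H$, with minimal EISL $N(N-1)(\mu_4-1)$ — in particular zero for PSK, recovering the classical fact that PSK-OFDM has identically zero periodic ranging sidelobes.

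The elementary norm inequality is the easy part; the delicate point is the \emph{uniqueness} assertion, i.e. carefully characterizing the equality case so as to exclude any non-monomial unitary $\mathbf{V}$ reaching $\left\|\mathbf{V}\right\|_4^4=N$, and making precise the equivalence (subcarrier reordering and per-subcarrier phases) under which OFDM is declared the \emph{only} optimizer. I would also treat the borderline constellation BPSK separately: although it violates Assumption~2, a direct recomputation shows the EISL is still governed by $\left\|\mathbf{F}_N\mathbf{U}\right\|_4^4$, so the same argument and conclusion apply.
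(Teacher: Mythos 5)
Your proof is correct and reaches the same characterization of the minimizers as the paper, but the key step is handled differently. The paper's proof reduces the problem to maximizing $\left\|\mathbf{F}_N\mathbf{U}\right\|_4^4$ over $\mathbb{U}(N)$ exactly as you do, but then simply \emph{cites} an external result (its reference on $\ell_4$-norm maximization) for the fact that the global maximizers are precisely the complex permutation matrices, and reads off $\mathbf{U}^\star=\mathbf{F}_N^H\mathbf{\Pi}\operatorname{Diag}(\boldsymbol{\theta})$. You instead prove that characterization from scratch: the row-wise bound $\sum_n a_{m,n}^2\le\max_n a_{m,n}\le 1$ under $\sum_n a_{m,n}=1$, the resulting bound $\left\|\mathbf{V}\right\|_4^4\le N$, and the equality analysis forcing each row to be a unit-modulus multiple of a standard basis vector, with orthonormality then forcing a monomial structure. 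This makes the argument self-contained and, importantly, gives a complete and transparent treatment of the equality case on which the "only" (uniqueness) claim rests — something the paper delegates entirely to the citation, and which Theorem 2's proof-by-contradiction later depends on. Your translation back to $\mathbf{U}=\mathbf{F}_N^H\mathbf{M}$ with $\mathbf{M}$ monomial, the resulting minimal EISL $(\mu_4-1)N(N-1)$, and the zero-sidelobe observation for PSK all match the paper. The only soft spot is the closing remark on BPSK, where "a direct recomputation shows" is asserted rather than carried out; that is a side comment outside the scope of the theorem (which assumes Assumptions 1 and 2), but if you want to keep it you should actually verify that the nonzero pseudo-variance of BPSK does not introduce additional terms in the EISL expression.
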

\begin{proof}
    To minimize the EISL in the $\mu_4 < 2$ case, it is obvious from Proposition \ref{prop:eisl} that one has to maximize $\left\|\mathbf{F}_N\mathbf{U}\right\|_4^4$ over the unitary group, i.e.,
    \begin{equation}\label{L4_max_easy}
        \max\limits_{\mathbf{U} \in \mathbb{U}(N)}\;\; \left\|\mathbf{F}_N\mathbf{U}\right\|_4^4.
    \end{equation}
    Note that the $\ell_4$-norm can be alternatively expressed as $\|\mathbf{F}_N\mathbf{U}\|_4^4=\sum_{n=1}^N\|\mathbf{x}_n\|_2^2$, where $[\mathbf{x}_n]_m = |[\mathbf{F}_N\mathbf{U}]_{n,m}|^2$, and $\mathbf{1}^T\mathbf{x}_n=1$ holds for all $i$ due to the unitarity of $\mathbf{F}_N\mathbf{U}$. Next, observe that the inequality
    \begin{equation}
      \|\mathbf{x}_n\|_1^2 \geq \|\mathbf{x}_n\|_2^2
    \end{equation}
    achieves its equality if and only if only one of the entries of $\mathbf{x}_n$ is $1$ while others are $0$. Together with the unitary constraint, this implies that the optimal $\mathbf{F}_N\mathbf{U}$ can only be complex permutation matrices. Consequently, the optimal signaling basis shall be expressed in the form of
    \begin{equation}\label{ofdm_opt}
        \mathbf{U}^\star = \mathbf{F}_N^H\mathbf{\Pi}\operatorname{Diag}({\boldsymbol{{\theta}}}),
    \end{equation}
    where $\mathbf{\Pi}\in \mathbb{R}^{N \times N}$ is any real permutation matrix, and ${\boldsymbol{{\theta}}} \in \mathbb{C}^{N}$ is a vector with unit-modulus entries. If $\mathbf{\Pi} = \mathbf{I}_N, {\boldsymbol{{\theta}}} = \mathbf{1}_N$, then $\mathbf{U}^\star$ represents the standard OFDM waveform. Otherwise, \eqref{ofdm_opt} simply results in an OFDM waveform with different initial phases over permuted subcarriers, completing the proof.
\end{proof}
From Theorem \ref{thm:pacf_eisl} it is obvious that for $\mu_4 < 2$ we have
\begin{equation}
    \sum\limits_{k = 1}^{N - 1}\mathbb{E}({|r_k|^2}) \ge (\mu_4 - 1)N(N-1),
\end{equation}
since $\left\|\mathbf{F}_N\mathbf{U}^\star\right\|_4^4 = N$. Furthermore, the following theorem provides a stronger result for the optimality of the OFDM.
\begin{thm}[OFDM Achieves the Lowest Average Sidelobe at Every Lag]\label{thm:pacf_everywhere}
OFDM is the only waveform that achieves the lowest average sidelobe level at every delay index $k$ of the P-ACF for sub-Gaussian constellations.
\end{thm}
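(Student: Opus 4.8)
The plan is to obtain per-lag optimality directly from Proposition 1, with no new machinery. For every $k\neq 0$, \eqref{squared_P-ACF} collapses to $\mathbb{E}(|\tilde r_k|^2)=N+(\mu_4-2)\|\mathbf{b}_k\|_2^2$, and since $\mu_4<2$ for any sub-Gaussian constellation (in particular all PSK and QAM), minimizing the sidelobe level at lag $k$ over $\mathbf{U}\in\mathbb{U}(N)$ is \emph{exactly} maximizing $\|\mathbf{b}_k\|_2^2$. So the whole statement reduces to two claims: (i) $\|\mathbf{b}_k\|_2^2\le N$ for every $k$; and (ii) equality holds simultaneously at all $k\in\{1,\dots,N-1\}$ if and only if $\mathbf{U}$ lies in the OFDM family \eqref{ofdm_opt}.

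For (i), I would write the $m$-th entry of $\mathbf{b}_k$ from \eqref{bk_defined} as $b_{k,m}=\sum_{n=1}^{N}|v_{m,n}|^2\,e^{-j2\pi k(n-1)/N}$. Because $\mathbf{V}=\mathbf{U}^H\mathbf{F}_N^H$ is unitary, its rows satisfy $\sum_n|v_{m,n}|^2=1$, so $b_{k,m}$ is a convex combination of unit-modulus complex numbers; hence $|b_{k,m}|\le 1$, and $\|\mathbf{b}_k\|_2^2=\sum_m|b_{k,m}|^2\le\sum_m|b_{k,m}|\le N$. The same bound also follows from convexity: $P_{m,n}:=|v_{m,n}|^2$ is doubly stochastic and $\mathbf{P}\mapsto\|\mathbf{P}\mathbf{w}_k\|_2^2$, with $\mathbf{w}_k=[1,e^{-j2\pi k/N},\dots,e^{-j2\pi k(N-1)/N}]^T$, is convex, so it is maximized over the Birkhoff polytope at a permutation matrix, with value $\|\mathbf{w}_k\|_2^2=N$; I prefer the triangle-inequality route because it also hands me the equality case.

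For (ii), equality $\|\mathbf{b}_k\|_2^2=N$ forces $|b_{k,m}|=1$ for every $m$, i.e. equality in the triangle inequality for the convex combination defining $b_{k,m}$, which in turn forces all $e^{-j2\pi k(n-1)/N}$ with $v_{m,n}\neq0$ to coincide. Specializing to $k=1$ — one of the lags at which optimality is required — the phases $\{e^{-j2\pi(n-1)/N}\}_{n=1}^N$ are pairwise distinct, so each row of $\mathbf{V}$ has a single nonzero entry; unitarity of the rows then makes that entry unit-modulus, and unitarity of the columns forces its location to trace out a permutation. Thus $\mathbf{V}$ is a complex permutation matrix, and since $\mathbf{U}=\mathbf{F}_N^H\mathbf{V}^H$ and the complex permutation matrices form a group, $\mathbf{U}$ is precisely of the form $\mathbf{F}_N^H\mathbf{\Pi}\operatorname{Diag}(\boldsymbol{\theta})$ in \eqref{ofdm_opt} — OFDM up to a subcarrier permutation and per-subcarrier phase. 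Conversely, any such $\mathbf{U}$ gives $|v_{m,n}|^2$ equal to a permutation matrix, hence $\|\mathbf{b}_k\|_2^2=N$ for every $k$, so OFDM does attain the minimum $\mathbb{E}(|\tilde r_k|^2)=(\mu_4-1)N$ at every lag (for PSK, $\mu_4=1$ and this minimum is $0$, i.e. OFDM with PSK symbols has a perfectly impulsive P-ACF, a convenient consistency check).

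I expect the delicate step to be the equality analysis in (ii): when $\gcd(k,N)>1$ the phases $e^{-j2\pi k(n-1)/N}$ are no longer distinct, so at such an isolated lag non-permutation doubly-stochastic $\mathbf{P}$ can also attain $\|\mathbf{b}_k\|_2^2=N$, and arguing lag by lag would not immediately pin down $\mathbf{V}$. The clean way around this is to invoke a single coprime lag such as $k=1$, where distinctness of the phases makes the complex-permutation structure of $\mathbf{V}$ fall out at once; the rest is bookkeeping. This also clarifies the word ``only'' in the statement: uniqueness is understood modulo the trivial subcarrier-permutation and phase reparametrization already identified after \eqref{ofdm_opt}, exactly as in Theorem 1.
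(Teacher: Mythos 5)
Your proof is correct, but it reaches the conclusion by a genuinely different route from the paper's on both halves of the argument. For the upper bound $\|\mathbf{b}_k\|_2^2\le N$, the paper argues indirectly: since $\mu_4\ge 1$, the chain $\mathbb{E}(|\tilde r_k|^2)=N+(\mu_4-2)\|\mathbf{b}_k\|_2^2\ge N-\|\mathbf{b}_k\|_2^2\ge 0$ (nonnegativity of an expected squared modulus) forces the bound; your triangle-inequality argument — each $b_{k,m}$ is a convex combination of unit-modulus numbers because the rows of the unitary $\mathbf{V}$ have unit norm — is more elementary, deterministic, and, as you note, hands you the equality case for free. For uniqueness, the paper argues by contradiction through Theorem 1 (any other per-lag minimizer would also minimize the EISL, and Theorem 1's uniqueness rests on the cited characterization of global $\ell_4$-norm maximizers over the unitary group as complex permutation matrices); you instead extract the complex-permutation structure of $\mathbf{V}$ directly from the equality condition at the single coprime lag $k=1$, where the phases $e^{-j2\pi(n-1)/N}$ are pairwise distinct, which makes your proof self-contained and independent of the external $\ell_4$ result. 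Your observation that lags with $\gcd(k,N)>1$ do not individually pin down $\mathbf{V}$ is a correct and worthwhile refinement that the paper does not address; since the theorem demands optimality at \emph{every} lag simultaneously, invoking $k=1$ alone is legitimate. The only thing the paper's route buys that yours does not is economy — it reuses Theorem 1 rather than redoing an equality analysis — whereas yours buys independence from the cited maximizer characterization and a sharper picture of which individual lags force the OFDM structure.
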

\begin{proof}
    We first establish an upper-bound for $\left\|\mathbf{b}_k\right\|_2^2$, and then prove that it is achievable by the OFDM waveform. Since $\mathbb{E}(|\tilde{r}_k|^2)\geq 0$ for all $k$, for $k\ne 0$ we have
    \begin{equation}
        \mathbb{E}(|\tilde{r}_k|^2) = N+(\mu_4-2)\left\|\mathbf{b}_k\right\|_2^2 \ge 0.
    \end{equation}
    Given the fact that $\mu_4 \ge 1$, by letting $\mu_4=1$, we may then see that $\left\|\mathbf{b}_k\right\|^2 \le N$ at every sidelobe index $k$. Now suppose that the OFDM is employed as the signaling basis, then the optimal $\mathbf{V}$ may be expressed as
    \begin{equation}
        \mathbf{V}^\star = {\mathbf{U}^\star}^H\mathbf{F}_N^H = \operatorname{Diag}({\boldsymbol{{\theta}}})\mathbf{\Pi},
    \end{equation}
    in which case we have
    \begin{equation}
        \left\| {{{\mathbf{v}}_n} \odot {{\mathbf{v}}_m}} \right\|_2^2 = \delta_{n,m},
    \end{equation}
    owing to the structure of the permutation matrix $\mathbf{\Pi}$. As a consequence, we have $\left\|\mathbf{b}_k\right\|^2 = N$ as per \eqref{bk_def}. This suggests that $\left\|\mathbf{b}_k\right\|^2\le N$ is a supremum for every $k\ne 0$, and that
    \begin{equation}\label{sidelobe_inf}
        \mathbb{E}(|\tilde{r}_k|^2)\ge (\mu_4-1)N,\quad \forall k\ne 0,
    \end{equation}
    is an infimum of each individual sidelobe, both of which are achieved by the OFDM basis \eqref{ofdm_opt}. This indicates that OFDM achieves the lowest average sidelobe level at every lag.\footnote{In a different context, the DFT matrix has been shown as a maximizer of periodic autocorrelation sidelobe levels of the column vectors of a unitary matrix \cite{sarwate}. Here, we also show that OFDM is the unique maximizer up to complex permutations.}

    We may now prove the uniqueness of the OFDM as a global minimizer of every ranging sidelobe by contradiction. Suppose that there exists another signaling basis matrix $\mathbf{U}^{\prime}$ that achieves the infimum in \eqref{sidelobe_inf}, but has a different form from \eqref{ofdm_opt}. Then it holds immediately that $\mathbf{U}^{\prime}$ also minimizes the EISL. This contradicts Theorem \ref{thm:pacf_eisl}, which states that OFDM is the only EISL minimizer, completing the proof. 
\end{proof}

\textbf{Remark 3:} An interesting, but not surprising fact is that for PSK constellations ($\mu_4 = 1$), both the individual and integrated sidelobe levels are zero under OFDM signaling, which means that the P-ACF of OFDM-PSK signal is always a unit impulse function. This attribute has nothing to do with the randomness of the PSK, as it can be simply deduced from \eqref{P_ACF_expansion} by letting $\mathbf{V} = \mathbf{I}_N$ for any realization of PSK sequences, which originates from the duality of Fourier transform (FT), that the FT of a unit impulse function is a constant over all frequencies. Consequently, even for PSK constellations that do not satisfy Assumption 2, for example, BPSK, the sidelobe levels would also be zeros. This fact has been widely utilized in existing systems employing deterministic sensing sequences, for example, the DFT of PSK sequences have been utilized in the 5G Primary Synchronization Signal (PSS). However, as we shall see later, this does not hold for the A-ACF case, where PSK constellations always generate non-zero sidelobes.

\textbf{Remark 4:} While OFDM ensures zero ranging sidelobe levels for all PSK constellations and serves as the unique sidelobe minimizer under Assumption 2, it may not be unique in the case of BPSK. To illustrate this, we construct an alternative modulation basis, which is distinct from OFDM but also yields zero ranging sidelobes for BPSK. As discussed in the Fourier duality framework above, a necessary and sufficient condition for zero ranging sidelobes is that the ISAC signal must exhibit a constant amplitude spectrum. This requirement translates to:
\begin{equation}\label{freq_condition_BPSK}
\left|\mathbf{F}_N\mathbf{x}\right|^2=\left|\mathbf{F}_N\mathbf{U}\mathbf{s}\right|^2 = \left|\mathbf{V}^H\mathbf{s}\right|^2 = \mathbf{1}_N.
\end{equation}
In the BPSK case, this condition is clearly satisfied if $\mathbf{V}$ is a complex permutation matrix, which corresponds to the OFDM waveform. To construct an alternative modulation basis satisfying \eqref{freq_condition_BPSK}, we first consider the case of $N = 2$. Let
\begin{equation}
    \mathbf{V}_2^{\star} = \frac{1}{{\sqrt 2 }}\left[ {\begin{array}{*{20}{c}}
  1&j \\ 
  j&1 
\end{array}} \right].
\end{equation}
It is straightforward to verify that: $(i)$ $\mathbf{V}_2^{\star}$ is unitary, and $(ii)$ $\mathbf{V}_2^{\star H}\mathbf{s}$ produces two normalized QPSK symbols for any BPSK input vector $\mathbf{s}$, thereby satisfying the condition in \eqref{freq_condition_BPSK}.

This construction generalizes naturally for $N > 2$. When $N$ is even, we define:
\begin{equation}\label{V_construction_even}
    \mathbf{V}_N^{\star} = \mathbf{\Pi}\operatorname{Diag}({\boldsymbol{{\theta}}})\otimes\mathbf{V}_2,
\end{equation}
where $\mathbf{\Pi}\operatorname{Diag}({\boldsymbol{{\theta}}})$ is any complex permutation matrix of size $N/2$. For odd $N$, the corresponding construction is:
\begin{equation}\label{V_construction_odd}
    \mathbf{V}_N^{\star} = \left[ {\begin{array}{*{20}{c}}
  {{\mathbf{\Pi }}\operatorname{Diag} ({\mathbf{\theta }}) \otimes {{\mathbf{V}}_2}}&{{{\mathbf{0}}_{N - 1}}} \\ 
  {{\mathbf{0}}_{N - 1}^T}&1 
\end{array}} \right],
\end{equation}
where $\mathbf{\Pi}\operatorname{Diag}({\boldsymbol{{\theta}}})$ is any complex permutation matrix of size $(N-1)/2$. It can again be verified that both \eqref{V_construction_even} and \eqref{V_construction_odd} are unitary matrices satisfying the constant amplitude spectrum condition \eqref{freq_condition_BPSK} for the BPSK constellation. While additional constructions may exist, we do not pursue them here, as they fall outside the main scope of this work and are left for future investigation.

\begin{corollary}[Optimality of the CP-SC for Doppler Measurement]\label{coro:cpsc_Doppler}
   CP-SC is the only waveform that achieves the lowest Doppler sidelobe level for sub-Gaussian constellations in the P-ACF case.
\end{corollary}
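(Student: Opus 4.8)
The plan is to reduce the corollary to Theorems~1--2 by making explicit the delay--Doppler duality that is already latent in Lemma~1. Lemma~1 shows that the delay P-ACF is, up to a factor $\tfrac{1}{N}$, the inverse DFT of the power spectrum $\{|[\mathbf{F}_N\mathbf{x}]_n|^2\}_{n=1}^N$; by the symmetry between time and frequency, the natural \emph{Doppler} P-ACF is the DFT of the time-domain power samples, $\tilde{r}^{\rm D}_k = \sum_{n=1}^N |x_n|^2 e^{-j2\pi k(n-1)/N} = \sqrt{N}\,\mathbf{x}^H\operatorname{Diag}(\mathbf{f}_{k+1})\mathbf{x}$ for $k=0,\ldots,N-1$. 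This is precisely the discrete Doppler ambiguity response, since a Doppler shift modulates the time samples by $e^{j2\pi\nu n}$ and the matched filter then reads off $\sum_n |x_n|^2 e^{j2\pi\nu n}$ at the Doppler bins $\nu = k/N$. The first step is to fix this definition and adopt the Doppler EISL $\sum_{k=1}^{N-1}\mathbb{E}(|\tilde{r}^{\rm D}_k|^2)$ as the figure of merit.

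Next I would observe that $\tilde{r}^{\rm D}_k$ has exactly the algebraic structure of the delay P-ACF in \eqref{P_ACF_expansion}, with the unitary $\mathbf{V}=\mathbf{U}^H\mathbf{F}_N^H$ replaced by $\mathbf{U}^H$: writing $x_n = [\mathbf{U}\mathbf{s}]_n = \mathbf{p}_n^H\mathbf{s}$, where $\mathbf{p}_n$ is the $n$-th column of $\mathbf{U}^H$ (the conjugated $n$-th row of $\mathbf{U}$), gives $\tilde{r}^{\rm D}_k = \sum_{n=1}^N |\mathbf{p}_n^H\mathbf{s}|^2 e^{-j2\pi k(n-1)/N}$. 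Since $\mathbf{U}^H$ is unitary and both the $\ell_4$ norm and the quadratic forms $|\mathbf{p}_n^H\mathbf{s}|^2$ are invariant under conjugation and transposition, Lemma~2, Proposition~1 and Proposition~2 carry over verbatim under the substitution $\mathbf{F}_N\mathbf{U}\mapsto\mathbf{U}$ --- which is exactly the substitution that turns OFDM into CP-SC. In particular one obtains
\begin{equation}
    \sum_{k=1}^{N-1}\mathbb{E}\left(|\tilde{r}^{\rm D}_k|^2\right) = N(N-1) + (\mu_4-2)N\left(\left\|\mathbf{U}\right\|_4^4 - 1\right).
\end{equation}
Equivalently and more conceptually, $\mathbf{F}_N\mathbf{x} = (\mathbf{F}_N\mathbf{U})\mathbf{s}$ and $\mathbf{F}_N^2$ is a real permutation matrix, so the Doppler EISL of the basis $\mathbf{U}$ equals the delay EISL of the basis $\mathbf{F}_N\mathbf{U}$, and the bijection $\mathbf{U}\mapsto\mathbf{F}_N\mathbf{U}$ on $\mathbb{U}(N)$ carries the OFDM basis $\mathbf{F}_N^H$ to the CP-SC basis $\mathbf{I}_N$.

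The remainder is the argument of Theorem~1 (and Theorem~2 for the per-lag version) transcribed. For sub-Gaussian constellations $\mu_4<2$, minimizing the Doppler EISL is equivalent to maximizing $\|\mathbf{U}\|_4^4$ over $\mathbb{U}(N)$, whose global maximizers are exactly the complex permutation matrices $\mathbf{U}^\star = \mathbf{\Pi}\operatorname{Diag}(\boldsymbol{\theta})$ by \cite{9518255}. Since $\mathbf{U}=\mathbf{I}_N$ (standard CP-SC) is of this form, and every complex permutation basis merely transmits permuted, phase-rotated symbols consecutively in time --- i.e.\ is a single-carrier waveform --- CP-SC is the unique optimal family, and uniqueness then follows by the same contradiction argument used in Theorems~1 and~2.

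The main obstacle I anticipate lies not in the optimization, which is a direct transcription, but in \emph{justifying the Doppler figure of merit} inside this one-shot vector model: one must argue cleanly that the relevant Doppler-domain quantity really is the DFT of $\{|x_n|^2\}$ (so that the duality is genuine rather than merely formal), and then carefully track the $\mathbf{V}\mapsto\mathbf{U}^H$ bookkeeping so that Proposition~2 applies with $\|\mathbf{F}_N\mathbf{U}\|_4$ literally replaced by $\|\mathbf{U}\|_4$.
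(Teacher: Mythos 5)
Your proposal is correct and follows essentially the same route as the paper: both reduce the Doppler sidelobe to the delay P-ACF of the frequency-domain signal $\mathbf{F}_N\mathbf{x}=(\mathbf{F}_N\mathbf{U})\mathbf{s}$, observe that $\mathbf{F}_N\mathbf{F}_N$ is a real permutation so the relevant objective becomes $\left\|\mathbf{U}\right\|_4^4$, and invoke the complex-permutation characterization of the $\ell_4$ maximizers from Theorems~1 and~2 to identify the CP-SC family. Your version merely makes the bookkeeping explicit by writing out the Doppler EISL in closed form, which the paper leaves implicit.
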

\begin{proof}
We highlight that the Doppler sidelobe is generated from the ACF of the frequency spectrum of the signal, which is also known as the zero-delay slice of the ambiguity function. Also note that the Fourier transform switches the delay axis and the Doppler axis of the ambiguity function. Therefore, the result to be proved is a direct corollary of Theorem~\ref{thm:pacf_everywhere}, since CP-SC and OFDM is the Fourier-dual of each other, and OFDM is the unique waveform achieving the lowest delay sidelobe level for sub-Gaussian constellations in the P-ACF case.
\end{proof}

We now investigate the ranging performance of the $\mu_4 > 2$ case, i.e., when $\mathbf{s}$ is randomly realized from a super-Gaussian constellation.
\begin{corollary}\label{coro:cpsc_sg}
    CP-SC achieves the lowest EISL for super-Gaussian constellations in the P-ACF case.
\end{corollary}
\begin{proof}
    Let us recall \eqref{EISL_closed_form}. To minimize the EISL in the $\mu_4 > 2$ case, one needs to minimize $\left\|\mathbf{F}_N\mathbf{U}\right\|_4^4$ over the unitary group. Note that under a fixed $l_2$ norm, the $\ell_4$ norm is minimized if each entry has a constant modulus, which suggests that $\mathbf{U} = \mathbf{I}_N$ is a minimizer. This completes the proof.
\end{proof}

\textbf{Remark 5:} It is also interesting to point out that when $\mathbf{U} = \mathbf{I}_N$, we have
$\left\|\mathbf{F}_N\mathbf{U}\right\|_4^4 = 1$ and $\left\|\mathbf{b}_k\right\|^2 = 0$ for $k\ne 0$. This implies that using CP-SC signaling results in a ranging EISL of $N(N-1)$, and a uniform sidelobe level $\mathbb{E}(|\tilde{r}_k|^2) = N$ at each index $k \ne 0$, no matter what value that $\mu_4$ takes. Moreover, we see that the Gaussian constellation with $\mu_4 = 2$ produces exactly the same EISL and individual sidelobe level regardless of the choice of $\mathbf{U}$, which is consistent with its unitary invariance property. This observation suggests that under the CP-SC signaling, all the constellations behave like Gaussian in terms of ranging. In such a case, the only difference lies in their mainlobe levels, where the super-Gaussian constellation may slightly outperform other constellations, since a larger kurtosis yields higher mainlobe as per \eqref{mainlobe_level}. 

\section{The A-ACF Case}\label{sec_4}
In this section, we analyze the ranging performance of random ISAC signals by investigating the sidelobe levels of the A-ACF. 

\subsection{Main Results}
Let us first give a closed-form characterization of the average squared A-ACF.
\begin{corollary}\label{coro:aacf}
The average squared A-ACF is
    \begin{align}\label{average_squared_A-ACF}
        \mathbb{E}(|{r}_k|^2) = N^2\delta_{0,k}+N-k+(\mu_4-2)\sum\limits_{n = 1}^{N}|\mathbf{u}_n^H\mathbf{J}_k\mathbf{u}_n|^2.
    \end{align} 
\begin{proof}
Note that the A-ACF can be viewed as the P-ACF of zero-padded sequences, which implies that $\mathbb{E}(|{r}_k|^2)$ can be obtained as a specific case of $\mathbb{E}(|\tilde{r}_k|^2)$ given in Proposition \ref{prop:pacf} with $\mathbf{U}$ replaced by $[\mathbf{U}; \mathbf{0}_{N\times N}]$. With this substitution, \eqref{average_squared_A-ACF} can be obtained after some algebra. For a detailed derivation, please refer to Appendix \ref{prop_3_proof}.
\end{proof} 
\end{corollary}

Note that the average mainlobe level of the A-ACF stays the same with that of the P-ACF as in Corollary \ref{coro: mainlobe}. To proceed, we provide the following result on the EISL of the A-ACF.
\begin{corollary}\label{coro:aacf_eisl}
The EISL of the A-ACF is
    \begin{equation}\label{A-ACF_EISL_closed_form}
        \sum\limits_{k = 1}^{N - 1}\mathbb{E}({|r_k|^2}) = \frac{N(N-1)}{2} + (\mu_4 - 2)N(\left\|\tilde{\mathbf{F}}_{2N}\mathbf{U}\right\|_4^4-\frac{1}{2}),
    \end{equation}
    where $\tilde{\mathbf{F}}_{2N}\in \mathbb{C}^{2N \times N}$ contains the first $N$ columns of the size-$2N$ DFT matrix $\mathbf{F}_{2N}$.
\begin{proof}
This result may be viewed as a corollary of Proposition \ref{prop:eisl}, with the $N$-point DFT matrices replaced by the trimmed $2N$-point DFT matrix $\tilde{\mathbf{F}}_{2N}$. For a detailed derivation, please refer to Appendix \ref{prop_4_proof}.
\end{proof}
\end{corollary}

\subsection{Optimal Signaling Schemes}
In order to minimize the EISL for sub-Gaussian constellations ($\mu_4 < 2$), one has to solve the following $\ell_4$ norm maximization problem:
\begin{equation}\label{L4_max_hard}
    \max\limits_{\mathbf{U} \in \mathbb{U}(N)}\;\; \left\|\tilde{\mathbf{F}}_{2N}\mathbf{U}\right\|_4^4.
\end{equation}
Unfortunately, solving \eqref{L4_max_hard} is much more difficult than solving its counterpart \eqref{L4_max_easy}, since it is not possible to produce a complex permutation matrix through multiplying $\tilde{\mathbf{F}}_{2N}$ with a size-$N$ unitary matrix. While substituting the size-$N$ IDFT matrix indeed yields a large objective value, we do not know whether it is a globally optimal solution. In fact, maximizing the $\ell_4$ norm $\left\|{\mathbf{Z}}\mathbf{U}\right\|_4$ over the unitary group for arbitrary $\mathbf{Z}$ generally remains an open problem \cite{zhai2020complete,9518255}. Towards that end, we seek to establish the local optimality of the IDFT matrix for problem \eqref{L4_max_hard}, which in turn offers theoretical guarantee for the optimality of the OFDM signaling in the A-ACF case. 

\begin{thm}[OFDM is locally optimal for sub-Gaussian constellations]\label{thm:local_opt}
For the A-ACF case, OFDM constitutes a local minimum of the EISL for sub-Gaussian constellations, namely, $\mathbf{U}^\star = \mathbf{F}_N^H$ is a local maximizer of \eqref{L4_max_hard} over the unitary group $\mathbb{U}(N)$.
\end{thm}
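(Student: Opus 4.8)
The plan is to verify that $\mathbf{U}^\star = \mathbf{F}_N^H$ is a local maximizer of $f(\mathbf{U}) \triangleq \|\tilde{\mathbf{F}}_{2N}\mathbf{U}\|_4^4$ on the unitary manifold $\mathbb{U}(N)$ by a second-order (Riemannian) optimality analysis. First I would parametrize a curve through $\mathbf{U}^\star$ using the exponential map: write $\mathbf{U}(t) = \mathbf{U}^\star e^{t\mathbf{A}}$ with $\mathbf{A}^H = -\mathbf{A}$ an arbitrary skew-Hermitian matrix (the tangent space at $\mathbf{U}^\star$). Let $\mathbf{W} \triangleq \tilde{\mathbf{F}}_{2N}\mathbf{U}^\star$, so that the columns $\mathbf{w}_n$ of $\mathbf{W}$ are (the first $N$ columns of $\mathbf{F}_{2N}$) permuted/combined by $\mathbf{F}_N^H$; the key structural fact I expect to exploit is that $|\mathbf{w}_{m,n}|$ is constant (equal to $1/\sqrt{2N}$) over all $2N$ rows $m$ for every column $n$ — this is the "flatness" of the DFT-of-DFT that makes OFDM special. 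Then $f(\mathbf{U}(t)) = \sum_{m,n} |[\tilde{\mathbf{F}}_{2N}\mathbf{U}^\star e^{t\mathbf{A}}]_{m,n}|^4$, and I would Taylor-expand this to second order in $t$, i.e. compute $\frac{d}{dt}f|_{t=0}$ and $\frac{d^2}{dt^2}f|_{t=0}$.

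The first-order step should show the gradient vanishes: differentiating $|g_{m,n}(t)|^4 = (g_{m,n}\bar g_{m,n})^2$ and using $\dot{\mathbf{U}}(0) = \mathbf{U}^\star\mathbf{A}$ gives a term proportional to $\sum_{m}|w_{m,n}|^2 \operatorname{Re}(\bar w_{m,n}[\tilde{\mathbf{F}}_{2N}\mathbf{U}^\star\mathbf{A}]_{m,n})$; because $|w_{m,n}|^2$ is independent of $m$, this collapses to a constant times $\operatorname{Re}\sum_m \bar w_{m,n}[\mathbf{W}\mathbf{A}]_{m,n} = \operatorname{Re}[\mathbf{A}]_{n,n}\cdot(\text{const})$, which is zero since $\mathbf{A}$ skew-Hermitian forces $[\mathbf{A}]_{n,n}$ purely imaginary — or more cleanly, $\operatorname{Re}\operatorname{tr}(\mathbf{W}^H\mathbf{W}\mathbf{A}) = \frac{1}{?}\operatorname{Re}\operatorname{tr}(\mathbf{A}) = 0$ using $\mathbf{W}^H\mathbf{W} = \mathbf{U}^{\star H}\tilde{\mathbf{F}}_{2N}^H\tilde{\mathbf{F}}_{2N}\mathbf{U}^\star$ and the fact that $\tilde{\mathbf{F}}_{2N}^H\tilde{\mathbf{F}}_{2N} = \mathbf{I}_N$ so $\mathbf{W}^H\mathbf{W} = \mathbf{I}_N$. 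The second-order step is where the real work lies: I would collect the Hessian quadratic form $Q(\mathbf{A}) = \frac{d^2}{dt^2}f|_{t=0}$, which will have two groups of terms — one from the second derivative of $e^{t\mathbf{A}}$ (producing $\mathbf{A}^2$ terms) and one from the square of the first derivative — and show $Q(\mathbf{A}) \le 0$ for every skew-Hermitian $\mathbf{A}$, with the inequality reflecting that $\mathbf{U}^\star$ is a (local) max of the $\ell_4$ norm in this restricted geometry.

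The main obstacle will be controlling the sign of the Hessian. Using the constant-modulus property $|w_{m,n}|^2 = c$ for all $m$, I expect the second-order terms to simplify substantially: the "bad" cross terms should telescope via $\sum_m \bar w_{m,n} w_{m,n'} = [\mathbf{W}^H\mathbf{W}]_{n,n'} = \delta_{n,n'}$, reducing $Q(\mathbf{A})$ to something like $-2c\sum_{n}\sum_{m}|[\mathbf{W}\mathbf{A}]_{m,n}|^2 + (\text{diagonal corrections})$ plus a manifestly nonpositive combination; the first piece is $-2c\|\mathbf{W}\mathbf{A}\|_F^2 = -2c\|\mathbf{A}\|_F^2 \le 0$ since $\mathbf{W}$ is an isometry. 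I would need to check that the remaining terms (those involving $[\mathbf{A}^2]_{n,n}$ and $|[\mathbf{A}]_{n,n}|^2$-type contributions) do not spoil nonpositivity — here one uses $[\mathbf{A}^2]_{n,n} = -\sum_k |[\mathbf{A}]_{n,k}|^2 \le 0$ for skew-Hermitian $\mathbf{A}$. If some residual term is sign-indefinite, I would handle it by splitting $\mathbf{A}$ into its diagonal (purely imaginary) part, which only reparametrizes subcarrier phases and lies in the null direction of $f$, and its off-diagonal part, and argue the quadratic form is negative semidefinite on the off-diagonal complement while being zero along the phase directions — consistent with Theorem 1, where $\mathbf{F}_N^H\operatorname{Diag}(\boldsymbol{\theta})$ is also optimal, so $f$ is genuinely flat in exactly those directions. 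Concluding that $Q(\mathbf{A}) \le 0$ for all tangent $\mathbf{A}$, with the zero set being precisely the (expected, benign) phase/permutation-infinitesimal directions, establishes that $\mathbf{U}^\star = \mathbf{F}_N^H$ is a local maximizer of \eqref{L4_max_hard}, hence a local minimum of the EISL.
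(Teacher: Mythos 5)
Your overall strategy---parametrize geodesics through $\mathbf{U}^\star$ via the exponential map with a skew-Hermitian tangent vector, show the first derivative of $\|\tilde{\mathbf{F}}_{2N}\mathbf{U}(t)\|_4^4$ vanishes at $t=0$ and the second derivative is non-positive---is exactly the paper's strategy. However, your execution rests on a false structural claim, and that claim is load-bearing for both derivative computations. You assert that $\mathbf{W}=\tilde{\mathbf{F}}_{2N}\mathbf{F}_N^H$ has constant-modulus entries $|w_{m,n}|=1/\sqrt{2N}$. It does not: writing $w_{p,n}=\frac{1}{\sqrt{2}N}\sum_{q=1}^{N}e^{j2\pi(q-1)[(n-1)-(p-1)/2]/N}$, the odd rows ($p=2l-1$) give $w_{2l-1,n}=\frac{1}{\sqrt{2}}\delta_{l,n}$, i.e.\ half of $\mathbf{W}$ is a scaled identity with many zero entries, and the even rows form a circulant Dirichlet-type kernel with varying modulus. (This is precisely the paper's decomposition $\mathbf{P}_{o,e}\tilde{\mathbf{F}}_{2N}\mathbf{F}_N^H=[\mathbf{C}_0;\mathbf{C}_{1/2}]$ with $\mathbf{C}_0=\frac{1}{\sqrt{2}}\mathbf{I}$.) Indeed, a constant-modulus $\tilde{\mathbf{F}}_{2N}\mathbf{U}$ is what characterizes the $\ell_4$-norm \emph{minimizer} $\mathbf{U}=\mathbf{I}_N$ (Corollary 4, the SC waveform); OFDM is a good candidate maximizer exactly because $\tilde{\mathbf{F}}_{2N}\mathbf{F}_N^H$ is peaky, not flat. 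Consequently your first-order argument, which collapses $\sum_m|w_{m,n}|^2\operatorname{Re}(\bar{w}_{m,n}[\mathbf{W}\mathbf{A}]_{m,n})$ to a multiple of $\operatorname{Re}\operatorname{tr}(\mathbf{W}^H\mathbf{W}\mathbf{A})=\operatorname{Re}\operatorname{tr}(\mathbf{A})=0$, is invalid: without constant modulus the cubic weighting by $|w_{m,n}|^2\bar{w}_{m,n}$ does not factor through $\mathbf{W}^H\mathbf{W}=\mathbf{I}$. The paper has to work considerably harder here, using a circular-convolution representation and an antisymmetry cancellation ($c_{N-n+2}=-c_n$) to show the gradient vanishes.

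The same problem defeats your second-order step. The Hessian of $\sum_{m,n}|g_{m,n}|^4$ along the geodesic is $\sum_{m,n}\bigl[8(\operatorname{Re}(\bar{g}\dot{g}))^2+4|g|^2|\dot{g}|^2+4|g|^2\operatorname{Re}(\bar{g}\ddot{g})\bigr]$, a competition between manifestly nonnegative terms and the single term involving $\mathbf{A}^2$; it is not ``manifestly nonpositive,'' and the reduction to $-2c\|\mathbf{W}\mathbf{A}\|_F^2$ plus benign corrections again uses the false flatness. The paper resolves this by reducing the quadratic form to $\frac{1}{N}\operatorname{Tr}\{\overline{\mathbf{H}}^H[{\rm diag}(\overline{\mathbf{Q}}\mathbf{1})+\mathbf{I}-(\overline{\mathbf{Q}}+\mathbf{1}\mathbf{1}^T)]\overline{\mathbf{H}}\}$ and proving the bracketed matrix is diagonally dominant, hence positive semi-definite. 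Your proposal does not contain a substitute for either of these two computations, so the core of the proof is missing.
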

\begin{proof}
    Observe that problem \eqref{L4_max_hard} can be reformulated as maximizing the function
    \begin{equation}
    f(\mathbf{V}) = \left\|\tilde{\mathbf{F}}_{2N}\mathbf{F}_N^{H}\mathbf{V}^H\right\|_4^4,
    \end{equation}
    subject to $\mathbf{V} \in \mathbb{U}(N)$. Therefore, the original problem reduces to proving that $\mathbf{V} = \mathbf{I}$ is a local maximizer of $f$. Given that the unitary group $\mathbb{U}(N)$ is a Riemannian manifold, proving local optimality at $\mathbf{I}$ entails demonstrating two conditions: $(i)$ The Riemannian gradient of $f$ is zero at $\mathbf{V} = \mathbf{I}$, and $(ii)$ $f$ is geodesically concave in a neighborhood of $\mathbf{I}$. For a detailed proof, please refer to Appendix \ref{thm_3_proof}.
\end{proof}

\textbf{Remark 6:} We note that a similar $\ell_4$ norm maximization problem was formulated in \cite{9518255} to construct a sparsity dictionary for beam-space mmWave signal processing. The problem was solved using a Coordinate Ascent (CA) algorithm, where unitary matrices were expressed as products of Givens rotation matrices and phase shift matrices. The authors showed that the DFT matrix is locally optimal with respect to the Givens rotation angle at each CA iteration. However, this finding does not fully guarantee the local optimality of the DFT matrix for the original optimization problem. In contrast, our proof directly demonstrates that the gradient and Hessian of \eqref{L4_max_hard} at $\mathbf{U}^\star = \mathbf{F}_N^H$ are zero and negative semi-definite, respectively, thus providing a more rigorous and complete proof compared to that of \cite{9518255}.

While the global optimality of the OFDM waveform under sub-Gaussian constellations is not theoretically guaranteed, our numerical results suggest that no other CP-free waveform achieves a lower sidelobe level than OFDM. Motivated by this empirical observation, we propose the following conjecture for future investigation.
\begin{conjecture}
    OFDM is the global minimizer of the EISL among all signals without CP under sub-Gaussian constellations. Specifically, $\mathbf{U}^\star = \mathbf{F}_N^H$ is the globally optimal solution to the $\ell_4$-norm maximization problem defined in \eqref{L4_max_hard}.
\end{conjecture}

Finally, we establish the global optimality of the SC waveform for super-Gaussian constellations in the A-ACF case.
\begin{corollary}\label{coro:aacf_sg}
    SC waveform achieves the lowest EISL for super-Gaussian constellations in the A-ACF case.
\end{corollary}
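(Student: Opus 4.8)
**Proof proposal for Corollary 6 (SC waveform achieves the lowest EISL for super-Gaussian constellations in the A-ACF case).**

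The plan is to reuse the closed-form EISL expression for the A-ACF established in Proposition~4, namely
\begin{equation}\nonumber
    \sum\limits_{k = 1}^{N - 1}\mathbb{E}({|r_k|^2}) = \frac{N(N-1)}{2} + (\mu_4 - 2)N\left(\left\|\tilde{\mathbf{F}}_{2N}\mathbf{U}\right\|_4^4-\frac{1}{2}\right),
\end{equation}
and to observe that for super-Gaussian constellations we have $\mu_4 > 2$, so the coefficient $(\mu_4-2)N$ is strictly positive. Hence minimizing the EISL over $\mathbf{U}\in\mathbb{U}(N)$ is equivalent to the $\ell_4$-norm \emph{minimization} problem
\begin{equation}\nonumber
    \min\limits_{\mathbf{U}\in\mathbb{U}(N)}\;\;\left\|\tilde{\mathbf{F}}_{2N}\mathbf{U}\right\|_4^4,
\end{equation}
which is the mirror image of the maximization problem \eqref{L4_max_hard}. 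The key difference is that $\ell_4$-norm minimization under an $\ell_2$ constraint is the \emph{easy} direction, unlike maximization.

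The main step is to lower-bound $\left\|\tilde{\mathbf{F}}_{2N}\mathbf{U}\right\|_4^4$ via a power-mean / Cauchy–Schwarz argument on each column. First I would note that because $\tilde{\mathbf{F}}_{2N}$ consists of the first $N$ columns of the unitary $\mathbf{F}_{2N}$, its columns are orthonormal, so $\tilde{\mathbf{F}}_{2N}^H\tilde{\mathbf{F}}_{2N} = \mathbf{I}_N$; consequently $\|\tilde{\mathbf{F}}_{2N}\mathbf{u}_n\|_2^2 = \|\mathbf{u}_n\|_2^2 = 1$ for each column $\mathbf{u}_n$ of $\mathbf{U}$. The entries of $\tilde{\mathbf{F}}_{2N}\mathbf{U}$ lie in a $2N\times N$ matrix whose columns each have unit $\ell_2$ norm, so the total squared Frobenius norm is exactly $N$, independent of $\mathbf{U}$. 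Given a fixed sum of squares of $2N^2$ nonnegative numbers (the squared magnitudes of the entries), the sum of their squares — i.e. $\left\|\tilde{\mathbf{F}}_{2N}\mathbf{U}\right\|_4^4$ — is minimized when those magnitudes are as equal as possible, and is bounded below by $N^2/(2N^2)\cdot(\text{something})$; more precisely, by the QM–AM inequality applied columnwise, $\sum_m |[\tilde{\mathbf{F}}_{2N}\mathbf{U}]_{m,n}|^4 \ge \frac{1}{2N}\big(\sum_m |[\tilde{\mathbf{F}}_{2N}\mathbf{U}]_{m,n}|^2\big)^2 = \frac{1}{2N}$, so $\left\|\tilde{\mathbf{F}}_{2N}\mathbf{U}\right\|_4^4 \ge \frac{N}{2N} = \frac12$. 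Then I would verify that $\mathbf{U} = \mathbf{I}_N$ attains this bound: with $\mathbf{U} = \mathbf{I}_N$, the matrix $\tilde{\mathbf{F}}_{2N}\mathbf{U} = \tilde{\mathbf{F}}_{2N}$ has all entries of constant modulus $1/\sqrt{2N}$, so $\left\|\tilde{\mathbf{F}}_{2N}\right\|_4^4 = 2N^2 \cdot (2N)^{-2} = \tfrac12$, which meets the lower bound with equality. Substituting back gives $\text{EISL} = \frac{N(N-1)}{2}$, and no choice of $\mathbf{U}$ can do better.

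The only mild subtlety — the part I would be most careful about — is the equality/optimality characterization: the QM–AM bound is tight precisely when all entries of $\tilde{\mathbf{F}}_{2N}\mathbf{U}$ have equal modulus $1/\sqrt{2N}$, which the SC waveform achieves but which may also be achieved by other "flat" unitary matrices (this is why the corollary states "achieves the lowest," not "is the only one achieving the lowest," in contrast to Theorems~1 and~2). I would remark that this matches the analogous $\ell_4$-minimization claim in Corollary~4 for the P-ACF case, where $\mathbf{U} = \mathbf{I}_N$ was likewise shown to be a minimizer of $\|\mathbf{F}_N\mathbf{U}\|_4^4$ by the same constant-modulus reasoning, and close with the observation that the Gaussian case $\mu_4 = 2$ makes the EISL independent of $\mathbf{U}$ entirely, consistent with Remark~3.
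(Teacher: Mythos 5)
Your proposal is correct and follows essentially the same route as the paper: reduce the EISL minimization for $\mu_4>2$ to minimizing $\left\|\tilde{\mathbf{F}}_{2N}\mathbf{U}\right\|_4^4$, observe that the $\ell_4$ norm is minimized under the column-wise $\ell_2$ constraint by a constant-modulus configuration, and verify that $\mathbf{U}=\mathbf{I}_N$ attains it. Your version merely makes explicit (via the QM--AM lower bound of $\tfrac12$ and the orthonormality of the columns of $\tilde{\mathbf{F}}_{2N}$) what the paper states in one line, which is a welcome but not substantively different elaboration.
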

\begin{proof}
    Again, to minimize the A-ACF sidelobe for the $\mu_4>2$ case, one has to minimize {{$\|\tilde{\mathbf{F}}_{2N}\mathbf{U}\|_4^4$}} over the unitary group $\mathbb{U}(N)$. Since the minimum of the $\ell_4$ norm is attained if each entry of $\tilde{\mathbf{F}}_{2N}\mathbf{U}$ has a constant modulus, $\mathbf{U} = \mathbf{I}_N$ is a global minimizer, completing the proof.
\end{proof}

\section{Numerical Results}\label{sec_5}
In this section, we present numerical results to validate our theoretical analysis. In general, we compare the performance of three signaling schemes, namely, SC, OFDM, and CDMA waveforms employing Walsh codes. Note that we do not compare with OTFS since it is a 2-dimensional modulation over both delay and Doppler domains, while this paper considers 1-dimensional modulation and delay sidelobe analysis only. Due to the strict page limit, we designate the discussion of the Doppler sidelobe as our future work. We employ both QAM and PSK as examples for sub-Gaussian constellations. To construct a super-Gaussian constellation, we design a specific 64-APSK constellation consisting of 4 circles, each circle contains 16 points, whose radii are $4.54 \times 10^{-5}, 0.0067, 0.0815$ and $1.9983$, respectively, leading to a kurtosis of $3.9867$, and is referred to as ``SG-64-APSK'' in the legends of the figures. Moreover, all the simulation results are attained by averaging over $1000$ random realizations.

\begin{figure}[!t]
	\centering
	\includegraphics[width = \columnwidth]{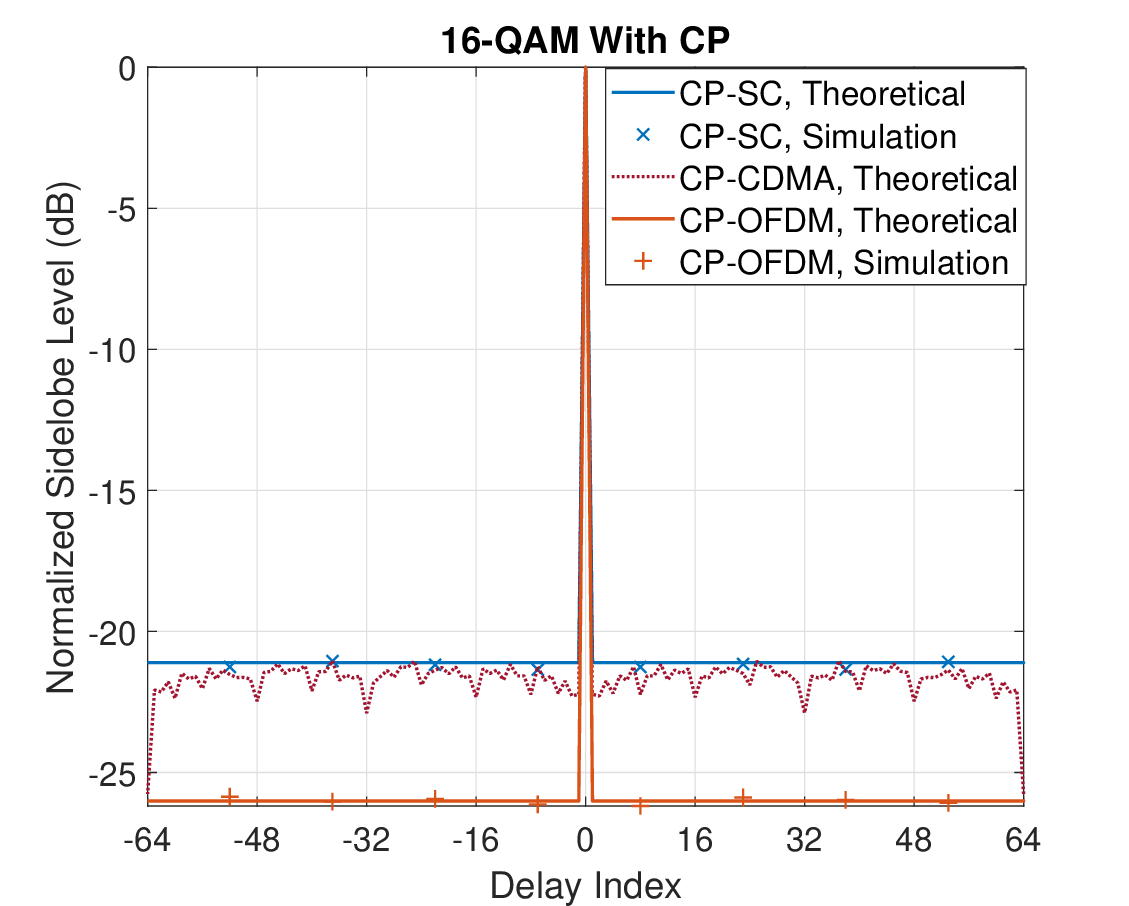}
	\caption{The P-ACF of the 16-QAM constellation under various signaling schemes, $N = 128$.}
    \label{16-QAM With CP}
\end{figure} 

\subsection{Average Sidelobe Level Analysis}
We first illustrate the P-ACF of the 16-QAM in Fig. \ref{16-QAM With CP} with $N = 128$, for signaling schemes with CP addition. It can be observed that the theoretical values matched very well with the simulation results, which confirms the correctness of our derivation in Sec. \ref{sec_3}. The CP-CDMA scheme with Hadamard matrix as signaling basis performs slightly better than the CP-SC approach. Moroever, CP-OFDM achieves 5 dB sidelobe level reduction compared to the CP-SC, and also outperforms the CP-CDMA.

\begin{figure}[!t]
	\centering
	\includegraphics[width = \columnwidth]{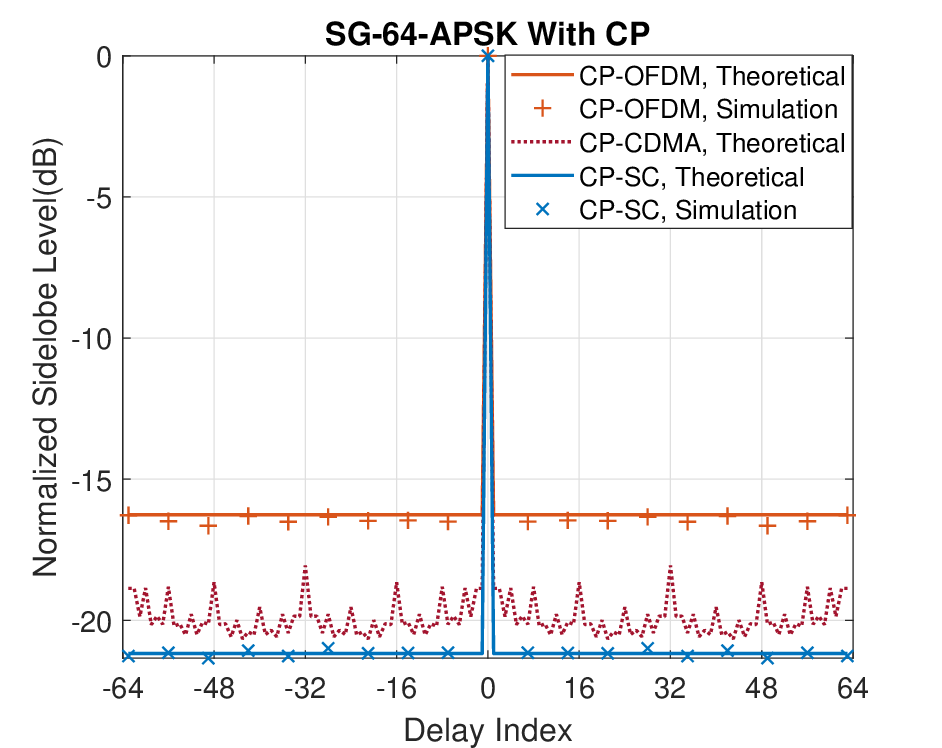}
	\caption{The P-ACF of the 64-APSK constellation under various signaling schemes, $N = 128$.}
    \label{64-APSK With CP}
\end{figure} 

To examine the performance of the super-Gaussian constellations, we portray the P-ACF of the designed SG-64-APSK constellation in Fig. \ref{64-APSK With CP} under various signaling schemes. As predicted by our theoretical framework, the CP-SC now becomes the best signaling scheme among all other strategies, which attains a 5 dB sidelobe reduction compared to the CP-OFDM waveform. In fact, even the CP-CDMA signal yields a lower sidelobe level than the CP-OFDM counterpart, making the latter the worst signaling basis for the super-Gaussian constellations.

\begin{figure}[!t]
	\centering
	\includegraphics[width = \columnwidth]{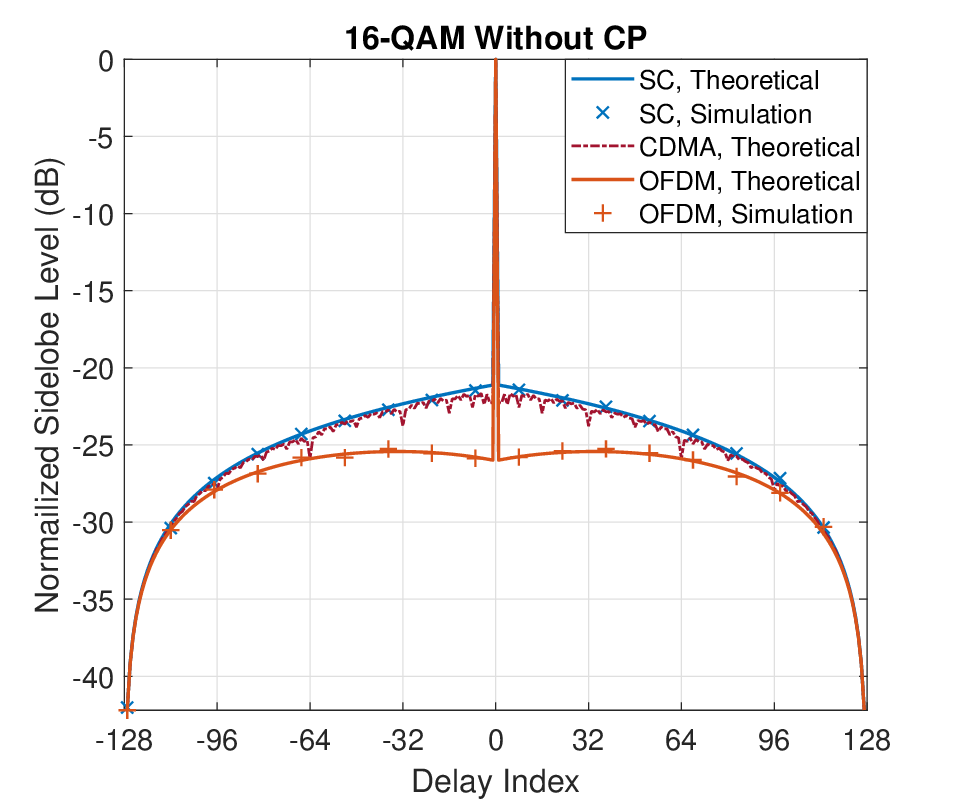}
	\caption{The A-ACF of the 16-QAM constellation under various signaling schemes, $N = 128$.}
    \label{16-QAM Without CP}
\end{figure} 

\begin{figure}[!t]
	\centering
	\includegraphics[width = \columnwidth]{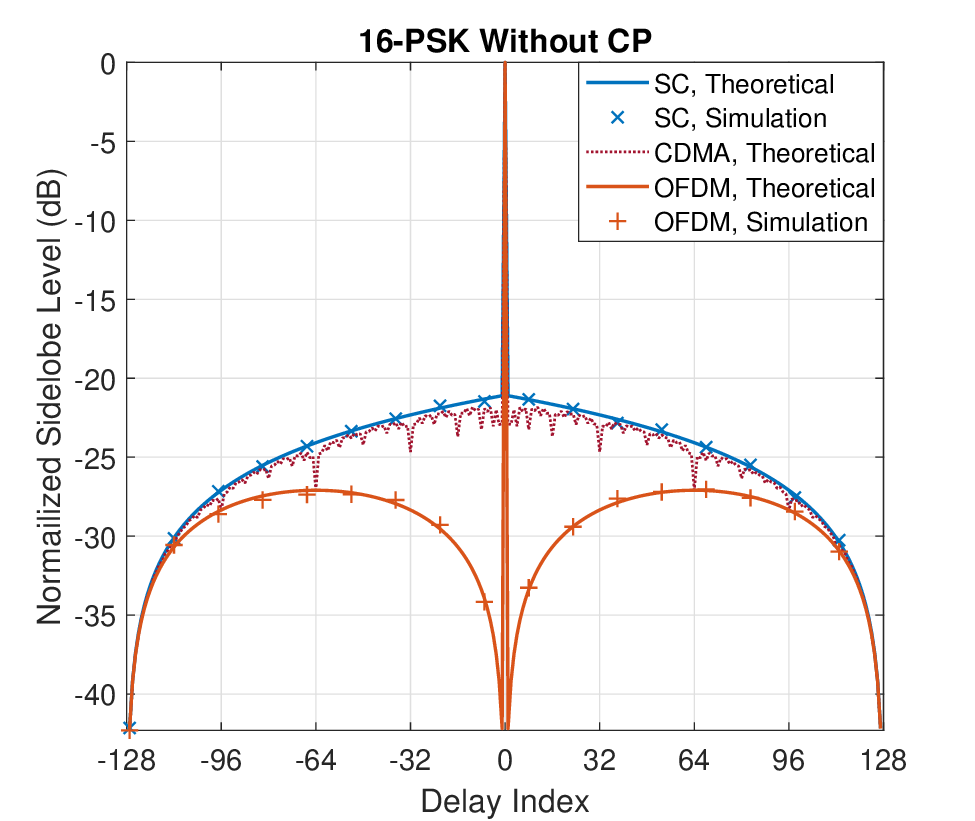}
	\caption{The A-ACF of the 16-PSK constellation under various signaling schemes, $N = 128$.}
    \label{16-PSK Without CP}
\end{figure} 

We then look at the sidelobe performance of 16-QAM and 16-PSK constellations for signaling strategies without CP in Fig. \ref{16-QAM Without CP} and Fig. \ref{16-PSK Without CP}. Again, all the theoretical results perfectly match their numerical counterparts. Moreover, the same trends may be observed in both figures, that the OFDM is superior to both SC and CDMA schemes, even if there is no global optimality guarantee for OFDM in the CP-free case. It is also interesting to highlight that for the 16-PSK constellation with OFDM signaling, the sidelobe level goes down when the delay index approaches to zero. This is because when the delay is small, the linear convolution may be approximated as a periodic convolution. In such a case, the sidelobe value of the A-ACF at those delay lags may be very close to those of its P-ACF counterpart, which is exactly zero for PSK alphabets, as discussed in Remark 3. As a comparison, the sidelobe performance of the SG-64-APSK constellation is portrayed in Fig. \ref{64-APSK Without CP}, under three signaling strategies without CP. It can be clearly seen that the SC scheme reaches the lowest sidelobe compared to other two waveforms, which is consistent with Corollary \ref{coro:cpsc_sg}. 

\begin{figure}[!t]
	\centering
	\includegraphics[width = \columnwidth]{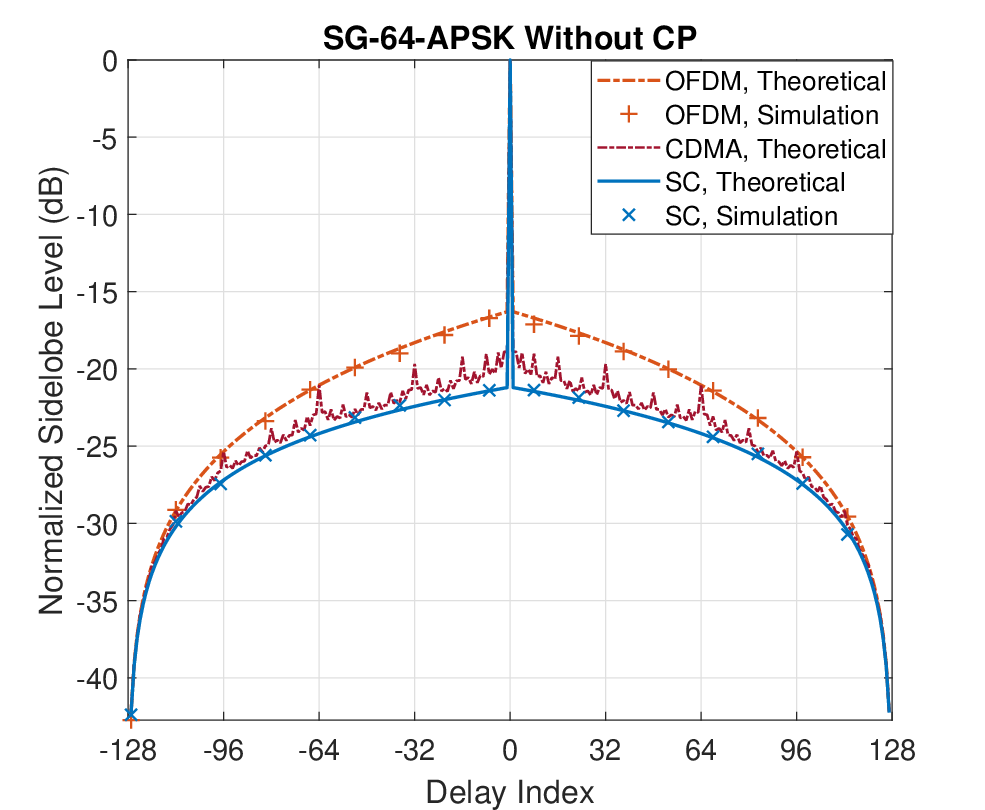}
	\caption{The A-ACF of the 64-APSK constellation under various signaling schemes, $N = 128$.}
    \label{64-APSK Without CP}
\end{figure}

\begin{figure}[!t]
	\centering
	\includegraphics[width = \columnwidth]{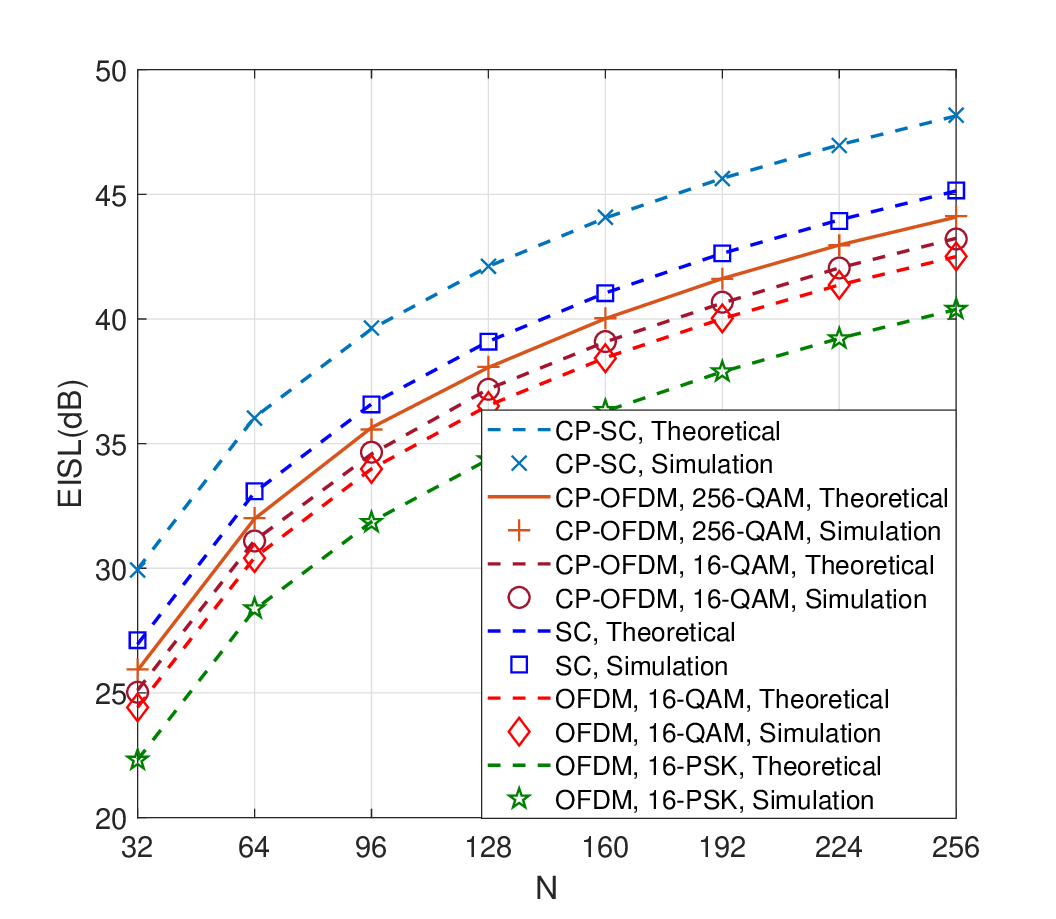}
	\caption{The resultant EISL for different constellations under OFDM and SC signaling with varying number of symbols.}
    \label{EISL_Comparison}
\end{figure} 

To show the overall performance of different signaling schemes, we illustrate the resultant EISL for different constellations under SC and OFDM waveforms in Fig. \ref{EISL_Comparison}, where both cases with and without CP are considered. First of all, signals with CP generally lead to higher sidelobe levels compared to those without, which may also be inferred from their respective EISL expressions \eqref{EISL_closed_form} and \eqref{A-ACF_EISL_closed_form}. Moreover, in both P-ACF and A-ACF cases, SC schemes result in the same EISL regardless of the choice of constellations. Higher-order QAM modulations always end up with larger sidelobe levels, owing to a larger kurtosis. As expected, PSK with OFDM signaling attains the smallest EISL of the A-ACF, despite that OFDM is only a local optimum in such a case.

\subsection{Variance of the Integrated Sidelobe and Mainlobe Level}
Although EISL characterizes the average behavior of the ACF, it does not reflect variability in the sidelobe level, which may critically affect detection reliability. To address this limitation, we introduce a complementary metric, namely, the Variance of Integrated Sidelobe and Mainlobe Level (VISML), for both the P-ACF and A-ACF, defined as:
\begin{align}
    \nonumber&\text{VISML}_\text{P-ACF} = \operatorname{Var}\left(\sum\limits_{k = 0}^{N - 1}{|\tilde{r}_k|^2}\right),\\
    &\text{VISML}_\text{A-ACF}= \operatorname{Var}\left(\sum\limits_{k = -N+1}^{N - 1}{|{r}_k|^2}\right).
\end{align}
This metric captures the fluctuations in both mainlobe and sidelobe power, which are crucial for ensuring stable and robust sensing performance.

By applying Parseval’s theorem and straightforward calculations, we obtain:
\begin{align}
     \nonumber \sum\limits_{k=0}^{N-1}\left|\tilde{r}_k\right|^2 &= N\left\|\mathbf{F}_N\mathbf{U}\mathbf{s}\right\|_4^4,\\
    \sum\limits_{k = -N+1}^{N - 1}{|{r}_k|^2} &= 2N\left\|\tilde{\mathbf{F}}_{2N}\mathbf{U}\mathbf{s}\right\|_4^4.
\end{align}
Accordingly, the VISML expressions for the two types of ACF become:
\begin{align}\label{VISML}
\nonumber&\text{VISML}_\text{P-ACF} = N^2\left[\mathbb{E}\left(\left\|\mathbf{F}_N\mathbf{U}\mathbf{s}\right\|_4^8\right) - \mathbb{E}^2\left(\left\|\mathbf{F}_N\mathbf{U}\mathbf{s}\right\|_4^4\right)\right],\\
&\text{VISML}_\text{A-ACF} = 4N^2\left[\mathbb{E}\left(\left\|\tilde{\mathbf{F}}_{2N}\mathbf{U}\mathbf{s}\right\|_4^8\right) -  \mathbb{E}^2\left(\left\|\tilde{\mathbf{F}}_{2N}\mathbf{U}\mathbf{s}\right\|_4^4\right)\right].
\end{align}
However, computing the VISML in closed form is generally intractable due to the involvement of up to the 8th-order moments of $\mathbf{s}$ and numerous cross terms. Even if an analytical expression were derived, proving that OFDM globally minimizes the VISML still remains highly challenging due to the non-convex dependence on $\mathbf{U}$ and the entangled structure of high-order statistics. To overcome these difficulties, we develop a numerical optimization approach to compute the signaling basis that minimizes VISML. Specifically, we adopt a stochastic gradient projection (SGP) method, which iteratively updates the solution in the direction of the negative stochastic gradient and projects the iterating point onto the unitary group to enforce orthonormal constraints.

\begin{figure}[!t]
	\centering
	\includegraphics[width = \columnwidth]{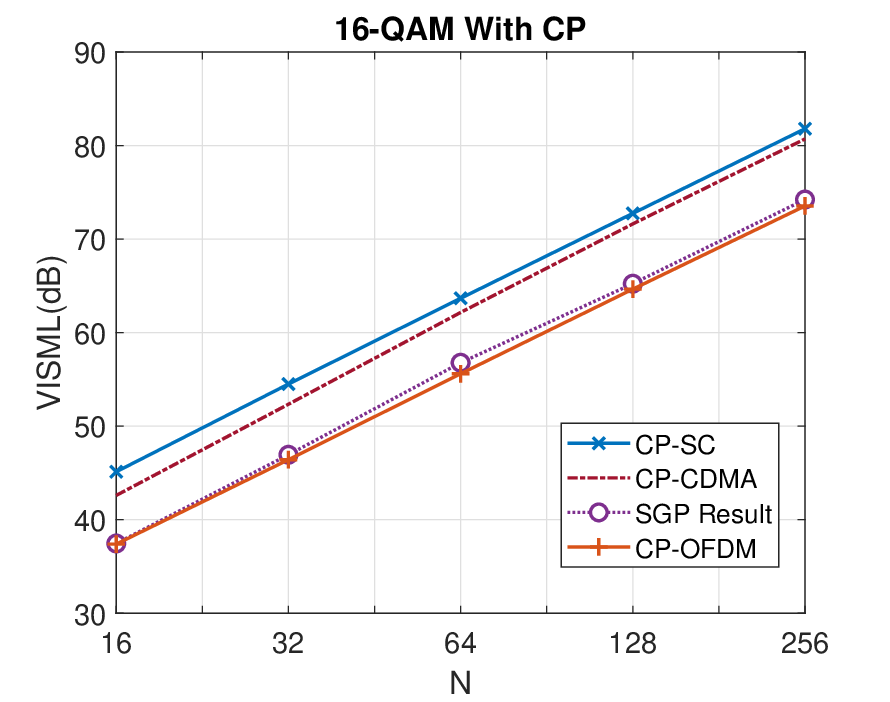}
	\caption{The VISML of the 16-QAM constellation under SC, OFDM, and SGP-based modulation schemes with CP.}
    \label{16-QAM With CP VISML SGP}
\end{figure} 

\begin{figure}[!t]
	\centering
	\includegraphics[width = \columnwidth]{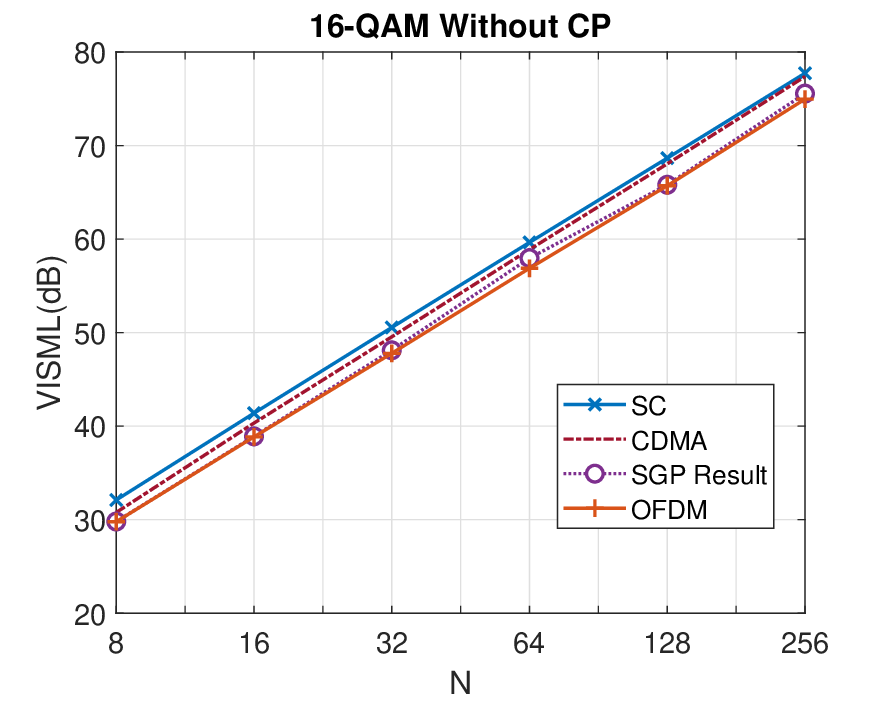}
	\caption{The VISML of the 16-QAM constellation under SC, OFDM, and SGP-based modulation schemes without CP.}
    \label{16-QAM Without CP VISML SGP}
\end{figure} 

In light of the above discussion, let us evaluate the VISML performance of different modulation schemes in terms of both P-ACF and A-ACF, which are depicted in Figs. \ref{16-QAM With CP VISML SGP} and \ref{16-QAM Without CP VISML SGP}, respectively, for increasing $N$ under a 16-QAM constellation. Notably, OFDM achieves the lowest overall variance among all waveforms, both with and without CP. To further validate OFDM’s optimality, we also present VISML results obtained via the proposed SGP algorithm for 16-QAM. In both P-ACF and A-ACF scenarios, the VISML values resulting from the SGP method closely match those of OFDM. Remarkably, the convergence point of the SGP algorithm consistently leads to (nearly) a complex permutation matrix of $\mathbf{V} = \mathbf{U}^H\mathbf{F}_N^H$, aligning with the structure of the $\ell_4$-norm maximization solution\footnote{Due to finite numerical precision, the converged matrix $\mathbf{V}$ does not exhibit exact zeros in non-permutation entries, resulting in a small performance gap compared to the OFDM.}. These findings indicate that OFDM corresponds to at least a local minimum of the VISML objective, reinforcing its significance as an optimal communication-centric ISAC waveform for ranging. Nevertheless, a rigorous theoretical proof of OFDM’s global optimality in minimizing the VISML remains an open problem, which we pose as a conjecture for future work.

\subsection{Ranging Performance Analysis}
\begin{figure}[!t]
	\centering
	\includegraphics[width = \columnwidth]{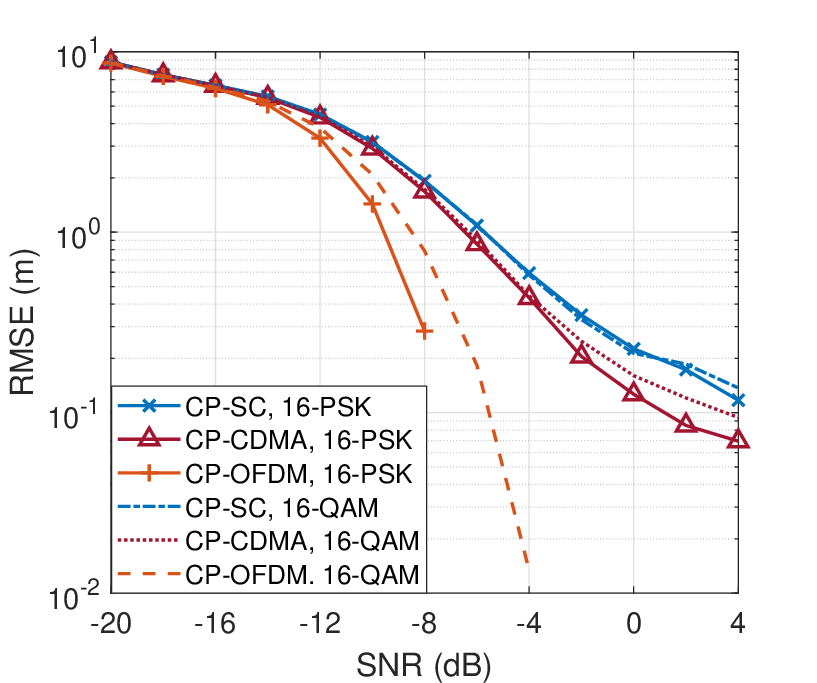}
	\caption{Two-target ranging RMSE of different waveforms with CP for PSK and QAM modulations, with a pair of strong and weak targets located at $11.25\text{m}$ and $18.75\text{m}$.}
    \label{RMSE_CP_results}
\end{figure} 

\begin{figure}[!t]
	\centering
	\includegraphics[width = \columnwidth]{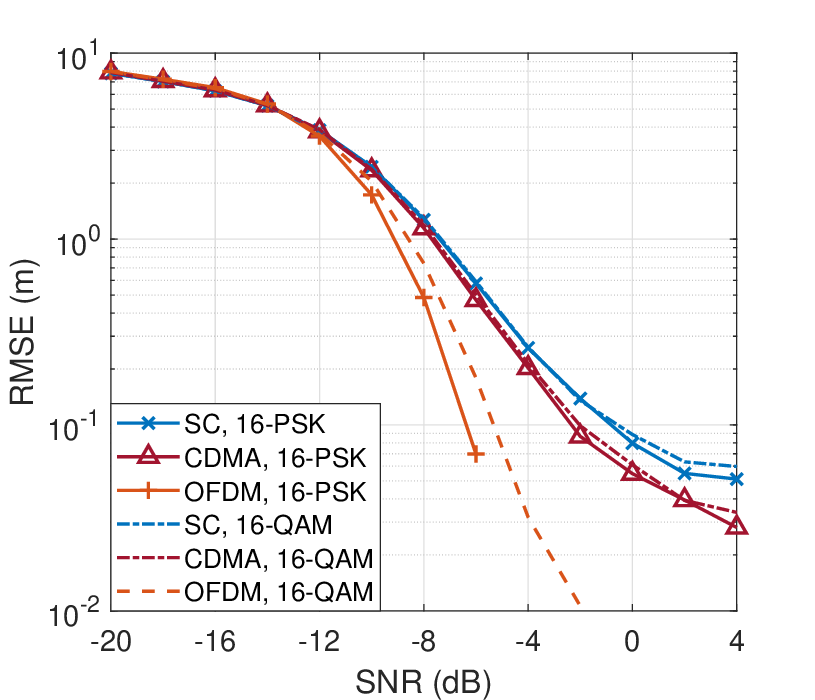}
	\caption{Two-target ranging RMSE of different waveforms without CP for PSK and QAM modulations, with a pair of strong and weak targets located at $11.25\text{m}$ and $18.75\text{m}$.}
    \label{RMSE_CPfree_results}
\end{figure} 
Finally, we examine the practical ranging performance of different waveforms under random PSK and QAM symbols in Fig. \ref{RMSE_CP_results} and Fig. \ref{RMSE_CPfree_results}. In particular, we consider a ranging task where a strong target and a weak target need to be simultaneously sensed. The bandwidth of different waveforms is set as $800\;\text{MHz}$. By fixing the transmit power to 1, the SNR is defined as the inverse of the noise variance. The two targets are located at $11.25\text{m}$ and $18.75\text{m}$, respectively, where the reflection power of the $11.25\text{m}$ target is 10 dB higher than the one at $18.75\text{m}$, such that the latter may be masked by the sidelobe of the strong target with high probability. In this sense, a lower EISL intuitively indicates a better ranging performance. This intuition has been confirmed by our simulation, where the OFDM waveform always outperforms the SC and CDMA by an order of magnitude, for both cases with and without CP. It may also be observed that PSK achieves significantly lower ranging errors compared to QAM for CP-OFDM/OFDM signaling, and a slightly better performance in the CP-CDMA/CDMA cases, despite that they exhibit almost equivalent performance under SC waveforms. These results are all consistent with their respective sidelobe levels analyzed in the above, which guarantee the usefulness of the proposed EISL as a well-defined ranging metric for random ISAC signaling.

\section{Conclusions}\label{sec_6}
This study has provided a comprehensive analysis of communication-centric ISAC waveforms, specifically focusing on their sensing performance in terms of the ranging sidelobe levels. Our findings demonstrated the superiority of OFDM modulation over other waveforms in achieving the lowest ranging sidelobe, confirmed through rigorous evaluation of both aperiodic and periodic auto-correlation functions. The introduction of the expectation of the integrated sidelobe level (EISL) as a key metric has further quantified this performance, establishing OFDM as the globally optimal waveform in the presence of a cyclic prefix (CP) and a locally optimal waveform in the absence of CP. The theoretical proofs and numerical validations presented reinforce OFDM's pivotal role in enhancing the ranging performance of ISAC systems. Future work should explore potential enhancements in waveform design, for example, nonlinear modulation techniques providing stronger sidelobe level guarantees, and further refine these findings by considering practical power allocation and pulse shaping designs. 

\section*{Acknowledgment}
The authors would like to thank the anonymous Reviewers and the Editor for their insightful comments and constructive suggestions, which have greatly improved the quality and clarity of this paper.

\appendices


\section{Proof of Proposition \ref{prop:pacf}}\label{prop_1_proof}
By noting \eqref{P_ACF_expansion}, the squared P-ACF may be formulated as
    \begin{align}
        |\tilde{r}_k|^2 \nonumber&= \sum\limits_{n = 1}^{N}|\mathbf{v}_n^H\mathbf{s}|^2 e^{\frac{-j2\pi k(n-1)}{N}}\sum\limits_{m = 1}^{N}|\mathbf{v}_m^H\mathbf{s}|^2 e^{\frac{j2\pi k(m-1)}{N}}\\
        &=\sum\limits_{n = 1}^{N}\sum\limits_{m = 1}^{N}|\mathbf{v}_n^H\mathbf{s}|^2|\mathbf{v}_m^H\mathbf{s}|^2e^{\frac{-j2\pi k(n-m)}{N}}.
    \end{align}
Expanding $|\mathbf{v}_n^H\mathbf{s}|^2$ yields
\begin{align}
    |\mathbf{v}_n^H\mathbf{s}|^2 \nonumber&= \mathbf{v}_n^H\mathbf{s}\mathbf{s}^H\mathbf{v}_n= (\mathbf{v}_n^T \otimes \mathbf{v}_n^H)\operatorname{vec}(\mathbf{s}\mathbf{s}^H)\\
    &=(\mathbf{v}_n^T \otimes \mathbf{v}_n^H)\tilde{\mathbf{s}} = \tilde{\mathbf{s}}^H(\mathbf{v}_n^* \otimes \mathbf{v}_n),
\end{align}
where $\tilde{\mathbf{s}}\triangleq\operatorname{vec}(\mathbf{s}\mathbf{s}^H)$. Therefore,
\begin{align}\label{average_P-ACF_step_1}
    \mathbb{E}(|\tilde{r}_k|^2) = \sum\limits_{n = 1}^{N}\sum\limits_{m = 1}^{N}(\mathbf{v}_n^T \otimes \mathbf{v}_n^H)\mathbf{S}(\mathbf{v}_m^* \otimes \mathbf{v}_m)e^{\frac{-j2\pi k(n-m)}{N}},
\end{align}
where $\mathbf{S} = \mathbb{E}(\tilde{\mathbf{s}}\tilde{\mathbf{s}}^H)$, whose entries are given by
\begin{equation}\label{s_tensor}
\begin{gathered}
  S_{(m-1)N+p,(n-1)N+q} = \mathbb{E}(s_m^*{s_n}{s_p}s_q^*) =  \hfill \\
  \left\{ {\begin{array}{*{20}{l}}
  {\mathbb{E}({{\left| {{s_n}} \right|}^4}) = {\mu _4},\;m = n = p = q}, \\ 
  {\mathbb{E}({{\left| {{s_n}} \right|}^2})\mathbb{E}({{\left| {{s_p}} \right|}^2}) = 1,\;m = n,\;p = q,\;n \ne p}, \\ 
  {\mathbb{E}({{\left| {{s_m}} \right|}^2})\mathbb{E}({{\left| {{s_n}} \right|}^2}) = 1,\;m = p,\;n = q,\;m \ne n}, \\ 
  {\mathbb{E}(s_m^{*2})\mathbb{E}(s_n^2) = 0,\;m = q,\;n = p,\;m \ne n\;}, \\ 
  {0,\;{\text{otherwise}}}, 
\end{array}} \right. \hfill \\ 
\end{gathered} 
\end{equation}
yielding
\begin{align}
\mathbf{S} \nonumber&= \mathbb{E}(\tilde{\mathbf{s}}\tilde{\mathbf{s}}^H) \\
  &= \left[ {\begin{array}{*{20}{c}}
  {{\mu _4}}&{{\mathbf{0}}_N^T}&1&{{\mathbf{0}}_N^T}&1& \cdots &1 \\ 
  {{{\mathbf{0}}_N}}&{{{\mathbf{I}}_N}}&{{{\mathbf{0}}_N}}&{{{\mathbf{0}}_N}}&{{{\mathbf{0}}_N}}& \cdots &{{{\mathbf{0}}_N}} \\ 
  1&0&{{\mu _4}}& \ldots &1& \ldots &1 \\ 
  {{{\mathbf{0}}_N}}&{{{\mathbf{0}}_N}}&{{{\mathbf{0}}_N}}&{{{\mathbf{I}}_N}}& \vdots & \cdots & \vdots  \\ 
  1&0&1&0&{{\mu _4}}& \cdots &1 \\ 
   \vdots & \vdots & \vdots & \vdots & \vdots & \ddots &{{{\mathbf{0}}_N}} \\ 
  1&0&1&0&1& \cdots &{{\mu _4}} 
\end{array}} \right] \in \mathbb{R}^{N^2\times N^2},
\end{align}
with $\mathbf{0}_{N}$ representing the all-zero vector with length $N$.

To simplify \eqref{average_P-ACF_step_1}, $\mathbf{S}$ is decomposed as
\begin{equation}\label{S_decompose}
    \mathbf{S} = \mathbf{I}_{N^2} + \mathbf{S}_1 + \mathbf{S}_2,
\end{equation}
where 
\begin{align}
    {{\mathbf{S}}_1} &= \operatorname{Diag} \left( {{{\left[ {{\mu _4}-2,{\mathbf{0}}_N^T,{\mu _4}-2,{\mathbf{0}}_N^T, \ldots {\mu _4}-2} \right]}^T}} \right),\\
    {{\mathbf{S}}_2} &= \left[ {{\mathbf{c}},{{\mathbf{0}}_{{N^2} \times N}},{\mathbf{c}}, \ldots ,{\mathbf{c}},{{\mathbf{0}}_{{N^2} \times N}},{\mathbf{c}}} \right],
\end{align}
with ${{\mathbf{0}}_{{N^2} \times N}}$ being the all-zero matrix of size ${{N^2} \times N}$, and 
\begin{equation}
    {\mathbf{c}} = {\left[ {1,{\mathbf{0}}_N^T,1, \ldots ,1,{\mathbf{0}}_N^T,1} \right]^T}.
\end{equation}
Due to the fact that $\mathbf{v}_n^H\mathbf{v}_m = \delta_{n,m}$, we have
\begin{align}
    &(\mathbf{v}_n^T \otimes \mathbf{v}_n^H)\mathbf{I}_{N^2}(\mathbf{v}_m^* \otimes \mathbf{v}_m) = \mathbf{v}_n^T\mathbf{v}_m^*\mathbf{v}_n^H\mathbf{v}_m = \delta_{n,m}, \label{1st_term}\\\nonumber
    &(\mathbf{v}_n^T \otimes \mathbf{v}_n^H)\mathbf{S}_1(\mathbf{v}_m^* \otimes \mathbf{v}_m)=  (\mu_4 -2)\sum\limits_{p = 1}^{N}|v_{p,n}|^2|v_{p,m}|^2 \\
    &=(\mu_4 -2)\left\| {{{\mathbf{v}}_n} \odot {{\mathbf{v}}_m}} \right\|_2^2, \label{2nd_term}\\\nonumber
    &(\mathbf{v}_n^T \otimes \mathbf{v}_n^H)\mathbf{S}_2(\mathbf{v}_m^* \otimes \mathbf{v}_m)=\sum\limits_{p = 1}^{N}|v_{p,n}|^2\sum\limits_{q = 1}^{N}|v_{q,m}|^2 \\
    & = \left\| {{\mathbf{v}}_n} \right\|_2^2\left\| {{\mathbf{v}}_m} \right\|_2^2 = 1,\label{3rd_term}
\end{align}
Plugging \eqref{S_decompose}, \eqref{1st_term}-\eqref{3rd_term} into \eqref{average_P-ACF_step_1} immediately leads to
\begin{align}\label{squared_P-ACF_1}
    \mathbb{E}(|\tilde{r}_k|^2) \nonumber&= N^2\delta_{0,k}+N\\
    &+(\mu_4-2)\sum\limits_{n = 1}^{N}\sum\limits_{m = 1}^{N}\left\| {{{\mathbf{v}}_n} \odot {{\mathbf{v}}_m}} \right\|_2^2 e^{\frac{j2\pi k(n-m)}{N}}.
\end{align}
Moreover, based on the definition of \eqref{bk_defined}, we have
\begin{align}\label{bk_def}
    \left\|\mathbf{b}_k\right\|_2^2 \nonumber&= \sum\limits_{p = 1}^{N}\sum\limits_{n = 1}^{N}\sum\limits_{m = 1}^{N}|v_{p,n}|^2|v_{p,m}|^2e^{\frac{j2\pi k(n-m)}{N}}\\
    & = \sum\limits_{n = 1}^{N}\sum\limits_{m = 1}^{N}\left\| {{{\mathbf{v}}_n} \odot {{\mathbf{v}}_m}} \right\|_2^2 e^{\frac{j2\pi k(n-m)}{N}}.
\end{align}
Therefore, \eqref{squared_P-ACF_1} can be recast in a compact form as \eqref{squared_P-ACF}, completing the proof.

\section{Proof of Proposition \ref{prop:eisl}}\label{prop_2_proof}
It can be noted from \eqref{bk_defined} that the $p$th entry of $\mathbf{b}_k$, namely,
    \begin{equation}
        b_{k,p} = \sum\limits_{n = 1}^N {{{\left| {{v_{p,n}}} \right|}^2}{e^{\frac{{ - j2\pi k\left( {n - 1} \right)}}{N}}}} 
    \end{equation}
    is the DFT of the $p$th row of $\mathbf{V}$. Using the Parseval's theorem yields
    \begin{equation}
        \frac{1}{N}\sum\limits_{k = 0}^{N-1}|b_{k,p}|^2 = \sum\limits_{n = 1}^N{{\left| {{v_{p,n}}} \right|}^4},
    \end{equation}
    and hence
    \begin{align}
         \sum\limits_{k = 0}^{N - 1}\mathbb{E}(|\tilde{r}_k|^2) \nonumber&= 2N^2+(\mu_4-2)\sum\limits_{k = 0}^{N - 1}\left\|\mathbf{b}_k\right\|_2^2\\
         \nonumber&=2N^2+(\mu_4-2)\sum\limits_{k = 0}^{N - 1}\sum\limits_{p = 1}^{N}|b_{k,p}|^2\\
         &= 2N^2+(\mu_4-2)N\left\|\mathbf{V}\right\|_4^4,
    \end{align}
    which implies
    \begin{align}
        \nonumber&\sum\limits_{k = 1}^{N - 1}\mathbb{E}({|\tilde{r}_k|^2}) =\sum\limits_{k = 0}^{N - 1}\mathbb{E}(|\tilde{r}_k|^2)-\mathbb{E}(|\tilde{r}_0|^2)\\
        &= N(N-1) + (\mu_4 - 2)N(\left\|\mathbf{F}_N\mathbf{U}\right\|_4^4-1).
    \end{align}

\section{Proof of Corollary \ref{coro:aacf}}\label{prop_3_proof}
For notational convenience, let us define $\mathbf{A} = \mathbf{U}^H = \left[\mathbf{a}_1, \mathbf{a}_2,\ldots,\mathbf{a}_N\right]$, such that the $n$th row of $\mathbf{U}$ may be denoted as $\mathbf{a}_n^H$, and thereby $a_{m,n} = u_{n,m}^*$. Therefore we have
\begin{equation}
     \sum\limits_{n = 1}^{N-k}a_{m,n}a^*_{m,n+k} = \sum\limits_{n = 1}^{N-k}u_{n,m}^*u_{n+k,m} = \mathbf{u}_m^H\mathbf{J}_k\mathbf{u}_m,
\end{equation}
which is the A-ACF of $\mathbf{u}_m$, and thus
\begin{align}
    \nonumber &\sum\limits_{n = 1}^{N-k}(\mathbf{a}^*_{n+k}\odot \mathbf{a}_n)\\
    &\nonumber = \left[\sum\limits_{n = 1}^{N-k}a_{1,n}a^*_{1,n+k},\ldots,\sum\limits_{n = 1}^{N-k}a_{N,n}a^*_{N,n+k}\right]^T\\
    & = \left[\mathbf{u}_1^H\mathbf{J}_k\mathbf{u}_1,\ldots,\mathbf{u}_N^H\mathbf{J}_k\mathbf{u}_N\right]^T,
\end{align}
which implies
\begin{equation}\label{A-ACF_U}
    \left\|\sum\limits_{n = 1}^{N-k}(\mathbf{a}^*_{n+k}\odot \mathbf{a}_n)\right\|_2^2 = \sum\limits_{n = 1}^{N}|\mathbf{u}_n^H\mathbf{J}_k\mathbf{u}_n|^2.
\end{equation}
Moreover, we have $\mathbf{x} = \mathbf{U}\mathbf{s} = \left[\mathbf{a}_1^H\mathbf{s}, \mathbf{a}_2^H\mathbf{s},\ldots,\mathbf{a}_N^H\mathbf{s}\right]^T$. The A-ACF of $\mathbf{x}$ may be therefore expressed as
\begin{equation}\label{A-ACF_expansion}
    r_k = \mathbf{x}^H\mathbf{J}_k\mathbf{x} = \mathbf{s}^H\mathbf{A}\mathbf{J}_k\mathbf{A}^H\mathbf{s} = \sum\limits_{n = 1}^{N-k}\mathbf{a}_n^H\mathbf{s}\mathbf{s}^H\mathbf{a}_{n+k}.
\end{equation}
With the above identities, we are now ready to present the average squared A-ACF in closed form.

From \eqref{A-ACF_expansion} we have
    \begin{align}\label{squared_A-ACF}
        |r_k|^2 \nonumber&= \sum\limits_{n = 1}^{N-k}\sum\limits_{m = 1}^{N-k}(\mathbf{a}_n^H\mathbf{s}\mathbf{s}^H\mathbf{a}_{n+k})(\mathbf{a}_{m+k}^H\mathbf{s}\mathbf{s}^H\mathbf{a}_m)\\
        &=\sum\limits_{n = 1}^{N-k}\sum\limits_{m = 1}^{N-k}(\mathbf{a}_{n+k}^T\otimes\mathbf{a}_n^H){\tilde{\mathbf{s}}}{\tilde{\mathbf{s}}}^T(\mathbf{a}_{m}\otimes\mathbf{a}_{m+k}^*),
    \end{align}
    where we define ${\tilde{\mathbf{s}}} = \operatorname{vec}(\mathbf{s}\mathbf{s}^H)$.
    In order to reuse the results in Proposition \ref{prop:pacf}, one has to transform ${\tilde{\mathbf{s}}}{\tilde{\mathbf{s}}}^T$ into ${\tilde{\mathbf{s}}}{\tilde{\mathbf{s}}}^H$. This can be realized by the commutation matrix $\mathbf{K} \in \mathbb{O}(N^2)$, such that $\mathbf{K}(\mathbf{a}\otimes\mathbf{b}) = \mathbf{b}\otimes\mathbf{a}$, where $\mathbb{O}(n)$ represents the orthogonal group of degree $n$. Therefore,
    \begin{equation}
       \mathbf{K}{\tilde{\mathbf{s}}} = \mathbf{K}(\mathbf{s}^*\otimes\mathbf{s}) = \mathbf{s}\otimes\mathbf{s}^* = {\tilde{\mathbf{s}}}^* \Rightarrow {\tilde{\mathbf{s}}}^T\mathbf{K}^T = {\tilde{\mathbf{s}}}^H.
    \end{equation}
    One may therefore recast each term in \eqref{squared_A-ACF} as
    \begin{equation}
    \begin{gathered}
      ({\mathbf{a}}_{n + k}^T \otimes {\mathbf{a}}_n^H)\tilde {\mathbf{s}}{\tilde {\mathbf{s}}^T}({{\mathbf{a}}_m} \otimes {\mathbf{a}}_{m + k}^*) \hfill \\
       = ({\mathbf{a}}_{n + k}^T \otimes {\mathbf{a}}_n^H)\tilde {\mathbf{s}}{\tilde {\mathbf{s}}^T}{{\mathbf{K}}^T}{\mathbf{K}}({{\mathbf{a}}_m} \otimes {\mathbf{a}}_{m + k}^*) \hfill \\
       = ({\mathbf{a}}_{n + k}^T \otimes {\mathbf{a}}_n^H)\tilde {\mathbf{s}}{\tilde {\mathbf{s}}^H}({\mathbf{a}}_{m + k}^* \otimes {{\mathbf{a}}_m}). \hfill \\ 
    \end{gathered} 
    \end{equation}
    Hence, the average squared A-ACF may be formulated as
    \begin{align}\label{a_squared_A-ACF}
        \mathbb{E}(|r_k|^2) = \sum\limits_{n = 1}^{N-k}\sum\limits_{m = 1}^{N-k}({\mathbf{a}}_{n + k}^T \otimes {\mathbf{a}}_n^H)\mathbf{S}({\mathbf{a}}_{m + k}^* \otimes {{\mathbf{a}}_m}),
    \end{align}
    with $\mathbf{S} = \mathbb{E}({\tilde {\mathbf{s}}}{\tilde {\mathbf{s}}^H})$. Using again \eqref{s_tensor}, we arrive at
    \begin{align}
     &(\mathbf{a}_{n+k}^T\otimes \mathbf{a}_n^H)\mathbf{S}(\mathbf{a}_{m+k}^*\otimes \mathbf{a}_m) \nonumber \\
     &\hspace{3mm}=\delta_{n,m} + \delta_{0,k}+(\mu_4-2)(\mathbf{a}_{n+k}^*\odot\mathbf{a}_n)^H(\mathbf{a}_{m+k}^*\odot \mathbf{a}_m),
    \end{align}
    which amounts to \eqref{average_squared_A-ACF}.

\section{Proof of Corollary \ref{coro:aacf_eisl}}\label{prop_4_proof}
It can be straightforwardly deduced from \eqref{average_squared_A-ACF} that
    \begin{align}\label{direct_sum}
        &\sum\limits_{k = 1}^{N - 1}\mathbb{E}({|r_k|^2}) \nonumber= \sum\limits_{k = 1}^{N - 1} (N-k) + (\mu_4 - 2)\sum\limits_{k = 1}^{N - 1}\sum\limits_{n = 1}^{N}|\mathbf{u}_n^H\mathbf{J}_k\mathbf{u}_n|^2\\
        &= \frac{N(N-1)}{2} + (\mu_4 - 2)\sum\limits_{n = 1}^{N}\sum\limits_{k = 1}^{N-1}|\mathbf{u}_n^H\mathbf{J}_k\mathbf{u}_n|^2,
    \end{align}
    where $\sum\nolimits_{k = 1}^{N-1}|\mathbf{u}_n^H\mathbf{J}_k\mathbf{u}_n|^2$ is the ISL of the A-ACF of $\mathbf{u}_n$. According to \cite{4749273}, the ISL of a deterministic sequence $\mathbf{u}_n = \left[u_{1,n},\ldots,u_{N,n}\right]^T$ may be equivalently written as
    \begin{align}
        \nonumber&\sum\limits_{k = 1}^{N-1}|\mathbf{u}_n^H\mathbf{J}_k\mathbf{u}_n|^2 = \frac{1}{4N}\sum\limits_{p = 1}^{2N}\left(\left|\sum\limits_{q = 1}^{N}u_{q,n}e^{-\frac{j2\pi pq}{2N}}\right|^2-1\right)^2\\
        &=\frac{1}{4N}\sum\limits_{p = 1}^{2N}\left(\left|\sum\limits_{q = 1}^{N}u_{q,n}e^{-\frac{j2\pi pq}{2N}}\right|^4 - 2\left|\sum\limits_{q = 1}^{N}u_{q,n}e^{-\frac{j2\pi pq}{2N}}\right|^2 + 1\right),
    \end{align}
    where
    \begin{align}
        \nonumber&\sum\limits_{p = 1}^{2N}\left|\sum\limits_{q = 1}^{N}u_{q,n}e^{-\frac{j2\pi pq}{2N}}\right|^4 = 4N^2\left\|\tilde{\mathbf{F}}_{2N}\mathbf{u}_n\right\|_4^4,\\
        &\sum\limits_{p = 1}^{2N}\left|\sum\limits_{q = 1}^{N}u_{q,n}e^{-\frac{j2\pi pq}{2N}}\right|^2 = 2N\left\|\tilde{\mathbf{F}}_{2N}\mathbf{u}_n\right\|_2^2 = 2N.
    \end{align}
    Therefore
    \begin{align}\label{ISL_U}
        \sum\limits_{n=1}^{N}\sum\limits_{k = 1}^{N-1}|\mathbf{u}_n^H\mathbf{J}_k\mathbf{u}_n|^2 \nonumber&= \sum\limits_{n=1}^{N}\left(N\left\|\tilde{\mathbf{F}}_{2N}\mathbf{u}_n\right\|_4^4 - \frac{1}{2}\right) \\
        &=N\left(\left\|\tilde{\mathbf{F}}_{2N}\mathbf{U}\right\|_4^4 -\frac{1}{2}\right).
    \end{align}
    Substituting \eqref{ISL_U} into \eqref{direct_sum} yields \eqref{A-ACF_EISL_closed_form}, completing the proof.
    
\section{Proof of Theorem \ref{thm:local_opt}}\label{thm_3_proof}
Let us first recast the objective function as
\begin{equation}
f(\mathbf{V})=\left\|\tilde{\mathbf{F}}_{2N}\mathbf{F}_N^{H}\mathbf{V}^H\right\|_4^4,\quad \mathbf{V}\in \mathbb{U}(N),
\end{equation}
where we recall that $\mathbf{V} = \mathbf{U}^H\mathbf{F}_N^H$. Therefore, the problem is equivalent to proving that $\mathbf{V} = \mathbf{I}$ is a local maximum of $f(\mathbf{V})$. Since $f(\mathbf{V})$ is defined over the unitary group, it suffices to show that the function $f(\mathbf{V})$ has a zero gradient at $\mathbf{V}=\mathbf{I}$, and that $f(\mathbf{V})$ is geodesically concave at $\mathbf{V}=\mathbf{I}$. This can be expressed as that
\begin{subequations}
\begin{align}
\left.\frac{{\rm d}}{{\rm d}t}(f\circ \gamma_{\mathbf{I}})(t)\right|_{t=0} &= 0, \label{first_order_derivative}\\
\left.\frac{{\rm d}^2}{{\rm d}t^2}(f\circ \gamma_{\mathbf{I}})(t)\right|_{t=0}&\leq 0\label{second_order_derivative}
\end{align}
\end{subequations}
hold for all geodesics $\gamma_{\mathbf{I}}(t) = \exp(t\bm{\Gamma})$ intersecting at $\mathbf{I}$, with $\bm{\Gamma}$ being an element in the tangent space $\mathcal{T}_{\mathbf{I}}\mathbb{U}(N)$. 
\subsubsection{Computing the First- and Second-Order Derivatives}
To facilitate the analysis, observe that the $\ell_4$ norm is preserved under permutations. This allows us to rearrange the rows of $\tilde{\mathbf{F}}_{2N}$ such that
\begin{equation}
f(\mathbf{V})=\left\|\mathbf{P}_{o,e}\tilde{\mathbf{F}}_{2N}\mathbf{F}_N^{H}\mathbf{V}^H\right\|_4^4,
\end{equation}
where $\mathbf{P}_{o,e}$ is a permutation matrix which separates the odd rows from the even rows of $\tilde{\mathbf{F}}_{2N}$, namely we have
\begin{equation}
    \mathbf{P}_{o,e}\tilde{\mathbf{F}}_{2N} = \left[\tilde{\mathbf{F}}_{2N,o}; \tilde{\mathbf{F}}_{2N,e}\right],
\end{equation}
with $\tilde{\mathbf{F}}_{2N,o},\tilde{\mathbf{F}}_{2N,e}\in\mathbb{C}^{N\times N}$ containing the odd and even rows of $\tilde{\mathbf{F}}_{2N}$, respectively. By exploiting the structure of DFT matrices, we have
\begin{equation}
    \tilde{\mathbf{F}}_{2N,o} = \frac{1}{\sqrt{2}}\mathbf{F}_N, \quad\tilde{\mathbf{F}}_{2N,e} = \frac{1}{\sqrt{2}}\mathbf{F}_N\mathbf{D}_{\frac{1}{2}},
\end{equation}
where $\mathbf{D}_{\alpha}\in \mathbb{C}^{N \times N}$ is a diagonal matrix with its $n$-th diagonal entry being $e^{j\frac{2\pi(n-1)}{N} \cdot \alpha}$. For notational convenience, let us further define $\mathbf{C}_{\alpha}=\frac{1}{\sqrt{2}}\mathbf{F}_N\mathbf{D}_{\alpha}\mathbf{F}_N^{H}$, which is a circular matrix. As a consequence, we have
\begin{equation}
    \mathbf{P}_{o,e}\tilde{\mathbf{F}}_{2N}\mathbf{F}_N^H = \left[\frac{1}{\sqrt{2}}\mathbf{I};\frac{1}{\sqrt{2}}\mathbf{F}_N\mathbf{D}_{\frac{1}{2}}\mathbf{F}_N^H\right] = \left[\mathbf{C}_0;\mathbf{C}_{\frac{1}{2}}\right].
\end{equation}
Next, we note that the tangent space of $\mathbb{U}(N)$ is its Lie algebra, which is the set of all $N\times N$ skew Hermitian matrices. This implies that for each geodesic $\gamma_{\mathbf{I}}(t) = \exp(t\mathbf{\Gamma})$, we have $\mathbf{\Gamma}=j\mathbf{H}$, where $\mathbf{H}$ is a Hermitian matrix. We may then express the geodesic as a power series around $\mathbf{I}$, given by
\begin{equation}
    \gamma_{\mathbf{I}}(t) = \exp(jt\mathbf{H}) = \mathbf{I} + j t\mathbf{H} -\frac{t^2}{2}\mathbf{H}^2 + o(t^2).
\end{equation}
It follows that
\begin{align}
    &\nonumber(f\circ\gamma_{\mathbf{I}})(t) = \left\|\mathbf{P}_{o,e}\tilde{\mathbf{F}}_{2N}\mathbf{F}_N^{H}\gamma_{\mathbf{I}}(t)\right\|_4^4\\
    &\hspace{3mm}\nonumber= \left\|\mathbf{C}_0\left(\mathbf{I}+ jt\mathbf{H}-\frac{t^2}{2}\mathbf{H}^2\right)\right\|_4^4\\
    &\hspace{3mm}\quad\;\;+\left\|\mathbf{C}_{\frac{1}{2}}\left(\mathbf{I}+ jt\mathbf{H}-\frac{t^2}{2}\mathbf{H}^2\right)\right\|_4^4+o(t^2).
\end{align}
By representing the entry-wise square of a matrix as $\left|\mathbf{X}\right|^2 = \mathbf{X}\odot\mathbf{X}^\ast$, we have $\left\|\mathbf{X}\right\|_4^4 = {\rm{Tr}}\left\{{\left|\mathbf{X}\right|^2}^{T}\left|\mathbf{X}\right|^2\right\}$. Based on that, one may express
\begin{align}
&\nonumber\left\|\mathbf{C}_{\alpha}\left(\mathbf{I}+j\mathbf{H}-\frac{t^2}{2}\mathbf{H}^2\right)\right\|_4^4 \\
&\nonumber= {\rm Tr}\left\{{|\mathbf{C}_{\alpha}|^{2}}^T|\mathbf{C}_{\alpha}|^2\right\} + 4t{\rm ReTr}\left\{{|\mathbf{C}_{\alpha}|^{2}}^T{\rm Im}\left(\mathbf{C}_{\alpha}\mathbf{H}\odot \mathbf{C}_{\alpha}^\ast\right)\right\}\\
&+2t^2{\rm ReTr}\left\{{|\mathbf{C}_{\alpha}|^2}^T\left[|\mathbf{C}_{\alpha}\mathbf{H}|^2-{\rm Re}\left(\mathbf{C}_{\alpha}\mathbf{H}^2\odot\mathbf{C}_{\alpha}^\ast\right)\right]\right\}+o(t^2),
\end{align}
leading to
\begin{align}\label{first_order_expression}
    &\nonumber \left.\frac{{\rm d}}{{\rm d}t}(f\circ \gamma_{\mathbf{I}})(t)\right|_{t=0}\\
    & \hspace{3mm}\!=\! 4{\rm Tr}\left\{{|\mathbf{C}_{\frac{1}{2}}|^{2}}^T{\rm Im}\left(\mathbf{C}_{\frac{1}{2}}\mathbf{H}\!\odot\! \mathbf{C}_{\frac{1}{2}}^\ast\right)\!+\!{|\mathbf{C}_{0}|^{2}}^T{\rm Im}\left(\mathbf{C}_{0}\mathbf{H}\!\odot\! \mathbf{C}_{0}^\ast\right)\right\},
\end{align}
and
\begin{subequations}\label{second_order_expression}
\begin{align}
    &\nonumber\left.-\frac{{\rm d}^2}{{\rm d}t^2}(f\circ \gamma_{\mathbf{I}})(t)\right|_{t=0}\\
    &\hspace{3mm} = 4{\rm Tr}\left\{{|\mathbf{C}_{\frac{1}{2}}|^{2}}^T\left[{\rm Re}\left(\mathbf{C}_{\frac{1}{2}}\mathbf{H}^2\odot\mathbf{C}_{\frac{1}{2}}^\ast\right)-|\mathbf{C}_{\frac{1}{2}}\mathbf{H}|^2\right]\right\}\label{hessian_first_part}\\
    &\hspace{7mm}+4{\rm Tr}\left\{{|\mathbf{C}_{0}|^{2}}^T\left[{\rm Re}\left(\mathbf{C}_{0}\mathbf{H}^2\odot\mathbf{C}_{0}^\ast\right)-|\mathbf{C}_{0}\mathbf{H}|^2\right]\right\}\label{hessian_second_part}.
\end{align}
\end{subequations}
In the remainder of the proof, we show that $\eqref{first_order_expression} = 0$ and $\eqref{second_order_expression} \ge 0$ holds for any Hermitian matrix $\mathbf{H}$.
\subsubsection{First-Order Derivative is Zero}
Since $\mathbf{C}_0 = \frac{1}{\sqrt{2}}\mathbf{I}$, it is straightforward to observe that
\begin{align}
{\rm Tr}\left\{{|\mathbf{C}_{0}|^{2}}^T{\rm Im}\left(\mathbf{C}_{0}\mathbf{H}\odot\mathbf{C}_{0}^\ast\right)\right\} = \frac{1}{4}{\rm Tr}\left\{{\rm Im}\left(\mathbf{H}\right)\right\}=0,
\end{align}
due to the fact that the diagonal entries of a Hermitian matrix are all real. 

Next, we consider the remaining part of \eqref{first_order_expression}. Note that
\begin{align}
    &\nonumber{\rm Tr}\left\{{|\mathbf{C}_{\frac{1}{2}}|^{2}}^T{\rm Im}\left(\mathbf{C}_{\frac{1}{2}}\mathbf{H}\odot\mathbf{C}_{\frac{1}{2}}^\ast\right)\right\}\\
    &={\rm Tr}\left\{\mathbf{F}_N^{H}{|\mathbf{C}_{\frac{1}{2}}|^{2}}^T\mathbf{F}_N\mathbf{F}_N^{H}{\rm Im}\left(\mathbf{C}_{\frac{1}{2}}\mathbf{H}\odot\mathbf{C}_{\frac{1}{2}}^\ast\right)\mathbf{F}_N\right\},
\end{align}
which allows us to rewrite
\begin{align}
    &\nonumber \mathbf{F}_N^{H}{|\mathbf{C}_{\frac{1}{2}}|^{2}}^T\mathbf{F}\\
    &\hspace{3mm}\nonumber = \frac{1}{2}\mathbf{F}_N^{H} \left(\mathbf{F}_N\mathbf{D}_{\frac{1}{2}}\mathbf{F}_N^{H} \odot \mathbf{F}_N\mathbf{F}_N^{H}\mathbf{F}_N^\ast\mathbf{D}_{\frac{1}{2}}^\ast\mathbf{F}_N^{T}\mathbf{F}_N\mathbf{F}_N^{H}\right)\mathbf{F}_N\\
    &\hspace{3mm}=\frac{1}{2}\left(\mathbf{D}_{\frac{1}{2}}\circledast\mathbf{U}_{\rm TR}\mathbf{D}_{\frac{1}{2}}^\ast\mathbf{U}_{\rm TR}^{H}\right), \label{2d_conv_representation1}
\end{align}
where $\mathbf{U}_{\rm TR}=\mathbf{F}_N^{H}\mathbf{F}_N^\ast$ is a permutation matrix, and $\circledast$ denotes the circular convolution. For $N\times N$ matrices $\mathbf{A}$ and $\mathbf{B}$, the 2-D circular convolution is defined as
\begin{align}
    \mathbf{A}\circledast\mathbf{B} = \mathbf{F}_N^{H}\left(\mathbf{F}_N\mathbf{A}\mathbf{F}_N^{H}\odot\mathbf{F}_N\mathbf{B}\mathbf{F}_N^{H} \right)\mathbf{F}_N,
\end{align}
with its $(i,k)$-th entry being defined as
\begin{equation}
    [\mathbf{A}\circledast\mathbf{B}]_{i,k} = \sum_{m=1}^N\sum_{n=1}^N [\mathbf{A}]_{m,n} [\mathbf{B}]_{\overline{i-m}+1,\overline{k-n}+1},
\end{equation}
where $\overline{n} = n \mod N$ for integer $n$. Moreover, for length-$N$ vectors $\mathbf{a}$ and $\mathbf{b}$, the 1-D circular convolution is
\begin{equation}
    \mathbf{a}\circledast\mathbf{b} = \mathbf{F}_N^{ H}\left(\mathbf{F}_N\mathbf{a}\odot \mathbf{F}_N\mathbf{b}\right).
\end{equation}
For notational simplicity, we denote $\mathbf{D}_{\frac{1}{2},{\rm TR}}=\mathbf{U}_{\rm TR}\mathbf{D}_{\frac{1}{2}}^\ast\mathbf{U}_{\rm TR}^{H}$ and $\tilde{\mathbf{H}}=\mathbf{F}_N^{H}\mathbf{H}\mathbf{F}_N$, and hence we have
\begin{align}
    &\nonumber\mathbf{F}_N^{H}{\rm Im}\left(\mathbf{C}_{\frac{1}{2}}\mathbf{H}\odot\mathbf{C}_{\frac{1}{2}}^\ast\right)\mathbf{F}_N \\
    &= -\frac{j}{4}\left(\mathbf{D}_{\frac{1}{2}}\tilde{\mathbf{H}}\circledast\mathbf{D}_{\frac{1}{2},\rm{TR}}-\mathbf{D}_{\frac{1}{2},\rm{TR}}\tilde{\mathbf{H}}\circledast\mathbf{D}_{\frac{1}{2}}\right). \label{2d_conv_representation2}
\end{align}
Therefore, \eqref{first_order_expression} may be expressed as
\begin{align}
 \eqref{first_order_expression} &\!=\! - \frac{j}{2}{\text{Tr}}\left\{
  \left( {{{\mathbf{D}}_{\frac{1}{2}}} \circledast {{\mathbf{D}}_{\frac{1}{2},{\text{TR}}}}} \right)\right. \nonumber \\
  &\hspace{15mm}\odot\mathbf{I}\odot \left.\left( {{{\mathbf{D}}_{\frac{1}{2}}}\tilde {\mathbf{H}} \circledast {{\mathbf{D}}_{\frac{1}{2},{\text{TR}}}} \!-\! {{\mathbf{D}}_{\frac{1}{2},{\text{TR}}}}\tilde {\mathbf{H}} \circledast {{\mathbf{D}}_{\frac{1}{2}}}} \right)  \right\}.
\end{align}
To proceed, we prove the following lemma.
\begin{lemma}
For any diagonal matrix $\mathbf{D}$ and another arbitrary matrix $\mathbf{A}$, we have
\begin{equation}\label{lemma_diag}
\mathbf{I}\odot(\mathbf{A}\circledast\mathbf{D}) = (\mathbf{I}\odot\mathbf{A})\circledast\mathbf{D}.
\end{equation}
\end{lemma}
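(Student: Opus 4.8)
The plan is to verify \eqref{lemma_diag} entrywise, exploiting the single structural fact that circular convolution against a diagonal matrix only couples row/column indices that differ by the same circular shift. Concretely, I would compute the $(i,k)$-th entry of each side using the defining formula $[\mathbf{A}\circledast\mathbf{B}]_{i,k} = \sum_{m=1}^N\sum_{n=1}^N [\mathbf{A}]_{m,n}[\mathbf{B}]_{\overline{i-m}+1,\overline{k-n}+1}$ and then compare.

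First I would handle the \emph{diagonal entries}. For the left-hand side, $[\mathbf{I}\odot(\mathbf{A}\circledast\mathbf{D})]_{i,i} = [\mathbf{A}\circledast\mathbf{D}]_{i,i} = \sum_{m,n}[\mathbf{A}]_{m,n}[\mathbf{D}]_{\overline{i-m}+1,\overline{i-n}+1}$. Since $\mathbf{D}$ is diagonal, the term $[\mathbf{D}]_{\overline{i-m}+1,\overline{i-n}+1}$ vanishes unless $\overline{i-m}=\overline{i-n}$, i.e. $m\equiv n \pmod N$; because $m,n\in\{1,\dots,N\}$ this forces $m=n$, so the double sum collapses to $\sum_{m=1}^N [\mathbf{A}]_{m,m}[\mathbf{D}]_{\overline{i-m}+1,\overline{i-m}+1}$. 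For the right-hand side, replacing $[\mathbf{A}]_{m,n}$ by $[\mathbf{I}\odot\mathbf{A}]_{m,n}=\delta_{m,n}[\mathbf{A}]_{m,m}$ in the convolution sum immediately yields the very same expression $\sum_{m=1}^N [\mathbf{A}]_{m,m}[\mathbf{D}]_{\overline{i-m}+1,\overline{i-m}+1}$. Hence the diagonals coincide.

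Second I would dispatch the \emph{off-diagonal entries}, which is the only point that needs a word of care. The left side $\mathbf{I}\odot(\mathbf{A}\circledast\mathbf{D})$ is diagonal by construction, so it suffices to show $(\mathbf{I}\odot\mathbf{A})\circledast\mathbf{D}$ is also diagonal. For $i\ne k$ its $(i,k)$-th entry is $\sum_{m=1}^N [\mathbf{A}]_{m,m}[\mathbf{D}]_{\overline{i-m}+1,\overline{k-m}+1}$, and diagonality of $\mathbf{D}$ forces $\overline{i-m}=\overline{k-m}$, i.e. $i\equiv k\pmod N$, which is impossible for distinct $i,k\in\{1,\dots,N\}$. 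Thus both matrices in \eqref{lemma_diag} are diagonal with identical diagonals, establishing the identity.

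I do not anticipate a real obstacle here: the entire content is the observation that convolving with a diagonal matrix preserves diagonality and only sees the diagonal of its partner. An index-free alternative would route through the Fourier characterization $\mathbf{A}\circledast\mathbf{B}=\mathbf{F}_N^{H}(\mathbf{F}_N\mathbf{A}\mathbf{F}_N^{H}\odot\mathbf{F}_N\mathbf{B}\mathbf{F}_N^{H})\mathbf{F}_N$ together with the fact that $\mathbf{I}\odot\mathbf{X}$ corresponds to averaging $\mathbf{F}_N\mathbf{X}\mathbf{F}_N^{H}$ along its diagonals, but the direct entrywise computation above is the cleanest and is the route I would take.
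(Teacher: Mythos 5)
Your proof is correct, and it takes a slightly more elementary route than the paper's. The paper proves the identity by working in the Fourier domain: it writes $[\mathbf{I}\odot(\mathbf{A}\circledast\mathbf{D})]_{i,i}=\mathbf{f}_i^{H}(\mathbf{F}_N\mathbf{A}\mathbf{F}_N^{H}\odot\mathbf{F}_N\mathbf{D}\mathbf{F}_N^{H})\mathbf{f}_i$, expands $\mathbf{A}$ and $\mathbf{D}$ over the rank-one basis $\mathbf{f}_m\mathbf{f}_n^{H}$, and uses $\mathbf{f}_m\odot\mathbf{f}_k\propto\mathbf{f}_{\overline{m+k}}$ together with orthogonality to force $m=n$. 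You instead apply the time-domain convolution sum $[\mathbf{A}\circledast\mathbf{D}]_{i,k}=\sum_{m,n}[\mathbf{A}]_{m,n}[\mathbf{D}]_{\overline{i-m}+1,\overline{k-n}+1}$ directly and let the diagonality of $\mathbf{D}$ collapse the sum to $m=n$; the two computations encode the same structural fact (convolving against a diagonal matrix only sees the diagonal of its partner) in dual domains, and both are equally short. One point in your favor: the paper's proof only compares the $(i,i)$ entries of the two sides and leaves implicit that $(\mathbf{I}\odot\mathbf{A})\circledast\mathbf{D}$ has no off-diagonal entries, whereas you verify this explicitly via the constraint $\overline{i-m}=\overline{k-m}\Rightarrow i=k$, so your argument establishes the stated matrix equality in full rather than just the diagonal part that the paper subsequently uses.
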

\begin{proof}
By denoting $\mathbf{F}_N = [\mathbf{f}_1,\mathbf{f}_2,\dotsc,\mathbf{f}_N]$, $\mathbf{D}={\rm diag}(d_1,\dotsc,d_N)$, it follows that
\begin{align}
&\nonumber[\mathbf{I}\odot(\mathbf{A}\circledast\mathbf{D})]_{i,i} = \mathbf{f}_i^{H}(\mathbf{F}_N\mathbf{A}\mathbf{F}_N^{H}\odot \mathbf{F}_N\mathbf{D}\mathbf{F}_N^{H})\mathbf{f}_i\\
&\nonumber=\sum_k d_k \mathbf{f}_i^{H}(\mathbf{F}_N\mathbf{A}\mathbf{F}_N^{H}\odot \mathbf{f}_k\mathbf{f}_k^{H})\mathbf{f}_i \\
&\nonumber=\sum_m \sum_n\sum_k d_k [\mathbf{A}]_{m,n}\mathbf{f}_i^{H}(\mathbf{f}_m\mathbf{f}_n^{H}\odot \mathbf{f}_k\mathbf{f}_k^{H})\mathbf{f}_i \\
&\nonumber=\sum_m \sum_n\sum_k d_k [\mathbf{A}]_{m,n}\mathbf{f}_i^{H}(\mathbf{f}_m\odot \mathbf{f}_k)(\mathbf{f}_n\odot\mathbf{f}_k)^{H}\mathbf{f}_i \\
&=\sum_m \sum_n\sum_k d_k [\mathbf{A}]_{m,n}\mathbf{f}_i^{H}\mathbf{f}_{\overline{m+k}}\mathbf{f}_{\overline{n+k}}^{H}\mathbf{f}_i.
\end{align}
Note that the term $\mathbf{f}_k^{H}\mathbf{f}_{\overline{m+k}}\mathbf{f}_{\overline{n+k}}^{H}\mathbf{f}_i\ne 0$ only when $m=n$ and $m+k=i$. Hence
\begin{align}
    \left[\mathbf{I}\odot(\mathbf{A}\circledast\mathbf{D})\right]_{i,i} &\nonumber = \sum_k \sum_m d_k a_{mm}\mathbf{f}_i^{H}(\mathbf{f}_m \mathbf{f}_m^{H}\odot \mathbf{f}_k\mathbf{f}_k^{H})\mathbf{f}_i\\
    &= \left[(\mathbf{I}\odot\mathbf{A})\circledast\mathbf{D}\right]_{i,i},
\end{align}
yielding \eqref{lemma_diag}.
\end{proof}

Upon relying on this lemma, we may obtain
\begin{align}
 \eqref{first_order_expression} &\!=\! - \frac{j}{2}{\text{Tr}}\left\{
  \left( {{{\mathbf{D}}_{\frac{1}{2}}} \circledast {{\mathbf{D}}_{\frac{1}{2},{\text{TR}}}}} \right)\right. \nonumber \\
  &\hspace{3mm}\odot \left.\left( {\mathbf{D}_{\frac{1}{2}}(\mathbf{I}\odot\tilde{\mathbf{H}})\circledast\mathbf{D}_{\frac{1}{2},{\rm TR}} \!-\! \mathbf{D}_{\frac{1}{2},{\rm TR}}(\mathbf{I}\odot\tilde{\mathbf{H}})\circledast\mathbf{D}_{\frac{1}{2}}} \right)  \right\}.
\end{align}
By letting
\begin{align}
    &\nonumber\mathbf{d}_{\frac{1}{2}} = {\rm ddiag}(\mathbf{D}_{\frac{1}{2}}),\\
    &\nonumber\mathbf{d}_{\frac{1}{2},{\rm TR}} = {\rm ddiag}(\mathbf{D}_{\frac{1}{2},{\rm TR}}),\\
    & \mathbf{d}_h = {\rm ddiag}(\mathbf{I}\odot\tilde{\mathbf{H}}),\label{vec_notations}
\end{align}
we may represent \eqref{first_order_expression} as $-\frac{j}{2}c$, where
\begin{align}
    c &\!=\! \mathbf{1}^{T}\left\{(\mathbf{d}_{\frac{1}{2}}\circledast\mathbf{d}_{\frac{1}{2},{\rm TR}})\right. \nonumber \\
    &\hspace{10mm}\odot \left.\left[(\mathbf{d}_{\frac{1}{2}}\odot \mathbf{d}_h)\circledast\mathbf{d}_{\frac{1}{2},{\rm TR}} \!-\! (\mathbf{d}_{\frac{1}{2},{\rm TR}}\odot\mathbf{d}_h)\circledast \mathbf{d}_{\frac{1}{2}}\right]\right\}.
\end{align}
Now it suffices to show that $c$ is zero. To this end, let us ponder on the fact that
\begin{equation}
    [\mathbf{d}_{\frac{1}{2}}]_n = \frac{1}{\sqrt{N}}e^{-\frac{j\pi (n-1)}{N}},
\end{equation}
which implies that
\begin{align}
&\nonumber[(\mathbf{d}_{\frac{1}{2}}\circledast\mathbf{d}_{\frac{1}{2},{\rm TR}})\odot((\mathbf{d}_{\frac{1}{2}}\odot \mathbf{d}_h)\circledast\mathbf{d}_{\frac{1}{2},{\rm TR}})]_n \\
&\nonumber= \sum_{k} [\mathbf{d}_{\frac{1}{2}}]_k[\mathbf{d}_{\frac{1}{2}}]_{\overline{k-n}+1}^\ast \sum_l [\mathbf{d}_{\frac{1}{2}}]_l[\mathbf{d}_h]_l [\mathbf{d}_{\frac{1}{2}}]_{\overline{l-n}+1}^\ast\\
&\nonumber=\frac{1}{N^2}\sum_k \sum_l [\mathbf{d}_h]_l\cdot e^{-\frac{j\pi}{N}(l-k+\overline{k-n}-\overline{l-n})}\\
&\nonumber=\frac{1}{N^2}\sum_k e^{\frac{j\pi}{N}(k-\overline{k-n})}\sum_l [\mathbf{d}_h]_l e^{-\frac{j\pi}{N}(l-\overline{l-n})}\\
&\nonumber=\frac{1}{N^2}\left|e^{-\frac{j\pi}{N}(1-\overline{1-n})}\right|^2 (N-2(n-1))\left(\sum_{l=n}^{N}[\mathbf{d}_h]_l-\sum_{l=1}^{n-1}[\mathbf{d}_h]_l\right)\\
&=\frac{N-2(n-1)}{N^2}\left(\sum_{l=n}^{N}[\mathbf{d}_h]_l-\sum_{l=1}^{n-1}[\mathbf{d}_h]_l\right),
\end{align}
Similar arguments can also be applied to $(\mathbf{d}_{\frac{1}{2}}\circledast\mathbf{d}_{\frac{1}{2},{\rm TR}})\odot\left[(\mathbf{d}_{\frac{1}{2},{\rm TR}}\odot\mathbf{d}_h)\circledast \mathbf{d}_{\frac{1}{2}}\right]$, yielding
\begin{align}
&\nonumber[(\mathbf{d}_{\frac{1}{2}}\circledast\mathbf{d}_{\frac{1}{2},{\rm TR}})\odot((\mathbf{d}_{\frac{1}{2},{\rm TR}}\odot\mathbf{d}_h)\circledast \mathbf{d}_{\frac{1}{2}})]_n\\
&= \frac{N-2(n-1)}{N^2}\left(\sum_{l=1}^{N-n+1}[\mathbf{d}_h]_l - \sum_{l=N-n+2}^N [\mathbf{d}_h]_l\right).
\end{align}
Consequently, we have
\begin{align}
    c &\!=\! \sum_n c_n \triangleq \sum_n\frac{N-2(n-1)}{N^2}\left(\sum_{l=n}^{N}[\mathbf{d}_h]_l+\sum_{l=N-n+2}^N [\mathbf{d}_h]_l \right. \nonumber \\ 
    &\hspace{40mm} - \left.\sum_{l=1}^{n-1}[\mathbf{d}_h]_l \!-\! \sum_{l=1}^{N-n+1}[\mathbf{d}_h]_l\right).
\end{align}
Without the loss of generality, let us assume that $N$ is an even number, while similar argument also holds for the case that $N$ is odd. It is not difficult to see that 
\begin{align}
    c_1 = \frac{1}{N}\left(\sum_{l=1}^{N}[\mathbf{d}_h]_l - \sum_{l=1}^{N}[\mathbf{d}_h]_l\right) = 0,\quad c_{\frac{N+2}{2}} = 0.
\end{align}
Moreover, for any $1<n<\frac{N+2}{2}$, we have
\begin{align}
    & c_{N-n+2}\!=\! - \frac{N-2(n-1)}{N^2}\left(\sum_{l=N-n+2}^{N}[\mathbf{d}_h]_l+\sum_{l=n}^N [\mathbf{d}_h]_l \right. \nonumber \\ 
    &\hspace{30mm} - \left.\sum_{l=1}^{N-n+1}[\mathbf{d}_h]_l \!-\! \sum_{l=1}^{n-1}[\mathbf{d}_h]_l\right) = -c_n,
\end{align}
which suggests that $c = \sum_n c_n = 0$, and thus $\eqref{first_order_expression} = 0$.


\subsubsection{Second-Order Derivative is Non-Positive}
Next, let us compute \eqref{hessian_first_part} and \eqref{hessian_second_part}. By leveraging the 2-D circular convolution representation methods as in \eqref{2d_conv_representation1} and \eqref{2d_conv_representation2}, and following the notations in \eqref{vec_notations}, one may simplify \eqref{hessian_first_part} as
\begin{align}
    \eqref{hessian_first_part}&\nonumber=\mathbf{1}^{T}\left\{(\mathbf{d}_{\frac{1}{2}}\circledast\mathbf{d}_{\frac{1}{2},{\rm TR}}) \right.\\
    &\nonumber\odot\left.\left(\frac{1}{2}\left[(\mathbf{d}_{\frac{1}{2}}\odot\mathbf{d}_{h^2})\circledast\mathbf{d}_{\frac{1}{2},{\rm TR}}+(\mathbf{d}_{\frac{1}{2},{\rm TR}}\odot\mathbf{d}_{h^2})\circledast\mathbf{d}_{\frac{1}{2}}\right]\right.\right.\\
    &\hspace{7mm}\left.\left.-{\rm ddiag}\left[\mathbf{D}_{\frac{1}{2}}\tilde{\mathbf{H}}\circledast\mathbf{U}_{\rm TR}(\mathbf{D}_{\frac{1}{2}}\tilde{\mathbf{H}})^\ast\mathbf{U}_{\rm TR}^{\rm H}\right]\right)\right\},
\end{align}
where $\mathbf{d}_{h^2} = {\rm ddiag}\left(\mathbf{I}\odot\tilde{\mathbf{H}}^2\right)$. We note that
\begin{align}
&\nonumber\left[(\mathbf{d}_{\frac{1}{2}}\circledast\mathbf{d}_{\frac{1}{2},{\rm TR}}) \odot{\rm ddiag}(\mathbf{D}_{\frac{1}{2}}\tilde{\mathbf{H}}\circledast\mathbf{U}_{\rm TR}(\mathbf{D}_{\frac{1}{2}}\tilde{\mathbf{H}})^\ast\mathbf{U}_{\rm TR}^{\rm H})\right]_i\\
&\nonumber=\sum_k[\mathbf{d}_{\frac{1}{2}}]_k[\mathbf{d}_{\frac{1}{2}}]_{\overline{k-i}+1}^\ast\sum_m [\mathbf{d}_{\frac{1}{2}}]_m[\mathbf{d}_{\frac{1}{2}}]_{\overline{m-i}+1}^\ast\\
&\hspace{5mm}\cdot\sum_n [\tilde{\mathbf{H}}]_{m,n}[\tilde{\mathbf{H}}]_{\overline{m-i}+1,\overline{n-i}+1}^\ast.\label{hessian_first_part_1}
\end{align}
To proceed, we construct a matrix $\mathbf{G}$, such that
\begin{equation}
    \left[\mathbf{G}\right]_{m,i} = \sum_n [\tilde{\mathbf{H}}]_{m,n}[\tilde{\mathbf{H}}]_{\overline{m-i}+1,\overline{n-i}+1}^\ast,
\end{equation}
with which we have
\begin{equation}
    \eqref{hessian_first_part_1} =\frac{N-2(i-1)}{N^2}\left(\sum_{m=i}^{N}[\mathbf{G}]_{m,i}-\sum_{m=1}^{i-1}[\mathbf{G}]_{m,i}\right).
\end{equation}
This implies that
\begin{align}
&\nonumber\mathbf{1}^T\left[(\mathbf{d}_{\frac{1}{2}}\circledast\mathbf{d}_{\frac{1}{2},{\rm TR}}) \odot{\rm ddiag}(\mathbf{D}_{\frac{1}{2}}\tilde{\mathbf{H}}\circledast\mathbf{U}_{\rm TR}(\mathbf{D}_{\frac{1}{2}}\tilde{\mathbf{H}})^\ast\mathbf{U}_{\rm TR}^{\rm H})\right] \\
&= \frac{1}{N}{\rm Tr}\left(\mathbf{Q}^T\mathbf{G}\right),
\end{align}
where the $i$-th column of $\mathbf{Q}\in\mathbb{R}^{N \times N}$ is defined as
\begin{equation}
    \mathbf{q}_i = \frac{N-2(i-1)}{N}\left[-\mathbf{1}_{i-1};\mathbf{1}_{N-i+1}\right].
\end{equation}
By using the same technique, we further obtain
\begin{align}
    &\nonumber\mathbf{1}^{T}\left((\mathbf{d}_{\frac{1}{2}}\circledast\mathbf{d}_{\frac{1}{2},{\rm TR}})\right.\\
    &\nonumber\hspace{5mm}\left.\odot\frac{1}{2}\left[(\mathbf{d}_{\frac{1}{2}}\odot\mathbf{d}_{h^2})\circledast\mathbf{d}_{\frac{1}{2},{\rm TR}}+(\mathbf{d}_{\frac{1}{2},{\rm TR}}\odot\mathbf{d}_{h^2})\circledast\mathbf{d}_{\frac{1}{2}}\right]\right)\\
    & \hspace{5mm}= \frac{1}{N}{\rm Tr}\left(\mathbf{Q}^T\mathbf{d}_{h^2}\mathbf{1}^T\right).
\end{align}
By defining $\overline{\mathbf{H}} = \left[\overline{\mathbf{h}}_1,\overline{\mathbf{h}}_2,\ldots,\overline{\mathbf{h}}_N\right]$, with its $\left(n,m\right)$-th entry being $\overline{h}_{n,m} = [\tilde{\mathbf{H}}]_{\overline{n-m}+1,m}$, it follows that
\begin{equation}
\left[\mathbf{G}\right]_{m,i} = \overline{\mathbf{h}}_{\overline{m-i}+1}^{H}\overline{\mathbf{h}}_m,
\end{equation}
This indicates that upon denoting
\begin{equation}
[\overline{\mathbf{Q}}]_{m,i} = [\mathbf{Q}]_{m,\overline{N-i+m}+1},
\end{equation}
we have
\begin{equation}
{\rm Tr}\left(\mathbf{Q}^T\mathbf{G}\right)= {\rm Tr}\left(\overline{\mathbf{H}}^{H}\overline{\mathbf{Q}}^T\overline{\mathbf{H}}\right),
\end{equation}
which results in
\begin{align}
    \eqref{hessian_first_part} &\nonumber = \frac{1}{N}{\rm Tr}\left(\mathbf{Q}^T\mathbf{d}_{h^2}\mathbf{1}^{T}-\mathbf{Q}^T\mathbf{G}\right)\\
    & =\frac{1}{N}{\rm Tr}\left\{\overline{\mathbf{H}}^{H}\left[{\rm diag}\left(\overline{\mathbf{Q}}\mathbf{1}\right)-\overline{\mathbf{Q}}\right]\overline{\mathbf{H}}\right\}.
\end{align}
Following a similar procedure, and by replacing $\mathbf{D}_{\frac{1}{2}}$ with $\mathbf{D}_0$, \eqref{hessian_second_part} can be represented as
\begin{equation}
    \eqref{hessian_second_part} = \frac{1}{N}{\rm Tr}\left\{\overline{\mathbf{H}}^{H}\left(\mathbf{I}-\mathbf{1}\mathbf{1}^T\right)\overline{\mathbf{H}}\right\},
\end{equation}
yielding
\begin{align}
    &\nonumber\eqref{hessian_first_part}+\eqref{hessian_second_part} = \\
    &\frac{1}{N}{\rm Tr}\left\{\overline{\mathbf{H}}^{H}\left[{\rm diag}\left(\overline{\mathbf{Q}}\mathbf{1}\right)+\mathbf{I}-\left(\overline{\mathbf{Q}}+\mathbf{1}\mathbf{1}^T\right)\right]\overline{\mathbf{H}}\right\}. \label{hessian_quadratic_form}
\end{align}
Our final task is to prove that ${\rm diag}\left(\overline{\mathbf{Q}}\mathbf{1}\right)+\mathbf{I}-\left(\overline{\mathbf{Q}}+\mathbf{1}\mathbf{1}^T\right)$ is positive semi-definite. It is not difficult to see that $\overline{\mathbf{Q}}$ is symmetric. Therefore, it suffices to show that ${\rm diag}\left(\overline{\mathbf{Q}}\mathbf{1}\right)+\mathbf{I}-\left(\overline{\mathbf{Q}}+\mathbf{1}\mathbf{1}^T\right)$ is diagonal dominant. Observe that
\begin{equation}
    {\rm diag}\left(\overline{\mathbf{Q}}\mathbf{1}\right)+\mathbf{I} = {\rm diag}\left\{\mathbf{1}^T\left(\overline{\mathbf{Q}} + \mathbf{1}\mathbf{1}^T\right)\right\}.
\end{equation}
Furthermore, each entry of $\overline{\mathbf{Q}}$ is not smaller than $-1$ by its definition. This implies that the entries of $\overline{\mathbf{Q}} + \mathbf{1}\mathbf{1}^T$ are all non-negative, and hence the absolute values of these entries are themselves. This leads to
\begin{equation}
    \left[{\rm diag}\left(\overline{\mathbf{Q}}\mathbf{1}\right)+\mathbf{I}\right]_{i,i} - \sum_k\left|\left[\overline{\mathbf{Q}}+\mathbf{1}\mathbf{1}^T\right]_{i,k}\right| = 0, \forall i,
\end{equation}
suggesting that the matrix ${\rm diag}\left(\overline{\mathbf{Q}}\mathbf{1}\right)+\mathbf{I}-\left(\overline{\mathbf{Q}}+\mathbf{1}\mathbf{1}^T\right)$ is diagonal dominant, and hence the quadratic form in \eqref{hessian_quadratic_form} is non-negative for any $\overline{\mathbf{H}}$. This indicates that the second-order derivative is non-positive, thereby $\mathbf{V} = \mathbf{I}$ is a local maximum of $f\left(\mathbf{V}\right)$, concluding the proof.

	\bibliographystyle{IEEEtran}
	\bibliography{IEEEabrv,references_SPM,references,database}

\end{document}